\newcommand{\executeiffilenewer}[3]{%
\ifnum\pdfstrcmp{\pdffilemoddate{#1}}%
{\pdffilemoddate{#2}}>0%
{\immediate\write18{#3}}\fi%
}
\newcommand{%
\executeiffilenewer{.svg}{.pdf}%
{inkscape -z -D --file=.svg %
--export-pdf=.pdf --export-latex}%
\input{.pdf_tex}%
}[1]{%
\executeiffilenewer{#1.svg}{#1.pdf}%
{inkscape -z -D --file=#1.svg %
--export-pdf=#1.pdf --export-latex}%
\input{#1.pdf_tex}%
}
\pgfplotsset{compat=newest}
\pgfplotsset{plot coordinates/math parser=false}
\newtheorem{thm}{Theorem}[section]
\newtheorem{theorem}{Theorem}[section]
\newtheorem{corollary}{Corollary}
\newtheorem{lemma}{lemma}
\theoremstyle{definition}
\newtheorem{definition}{Definition}
\theoremstyle{remark}
\newtheorem{remark}{Remark}
\newcommand{\Sen}[4]{\mathrm{S}_{#2,#1}\br{{#4}}}
\newcommand{\Dcal}{\hat{\cal D}}
\newcommand{\FE}{{\rm FE}}
\newcommand{\IH}{{\rm IH}}
\newcommand{\lama}[1]{\lambda_{\max}\left({#1}\right)}
\newcommand{\opt}[1]{{#1}^\ast}
\DeclareMathOperator{\sign}{\mathrm{sign}}
\DeclareMathOperator*{\mini}{\text{Minimize}}
\DeclareMathOperator*{\argmin}{{ \sf{argmin}}}
\newcommand{\fpert}[3]{{\bar{#1}}\br{#3}}
\newcommand{\fpertr}[3]{{\tilde{#1}}_{#2}\br{{#3}}}
\newcommand{\logpaw}[4]{{#4}_{{#3}}\br{{#2}}}
\newcommand{\expb}[1]{\exp\left({#1}\right)}
\newcommand{\logb}[1]{\log\left({#1}\right)}
\newcommand{\norm}[1]{\left\lVert {#1} \right\rVert}
\newcommand{\powb}[2]{{\left({#1}\right)}^{#2}}
\newcommand{\tran}[1]{{#1}^T}
\newcommand{\inner}[2]{\left\langle {#1},{#2} \right\rangle}
\newcommand{\Rad}[1]{\mathrm{R}\br{#1}}
\DeclareMathOperator{\der}{\mathrm{d}}
\newcommand{\diff}[1]{\der{#1}}
\DeclareMathOperator*{\Es}{\mathrm{E}}
\DeclareMathOperator*{\Vars}{\mathrm{Var}}
\DeclareMathOperator{\KLs}{\mathrm{KL}}
\newcommand{\N}{\mathcal{N}}
\newcommand{\reg}{\mu}
\newcommand{\Var}[2]{\Vars_{{#1}}\left({#2}\right)}
\newcommand{\Gauss}[2]{\N\left({#1},{#2}\right)}
\newcommand{\ExP}[2]{\Es_{{#1}}\left [{#2}\right ]}
\newcommand{\KL}[2]{\KLs\left({#1 }\parallel {#2}\right)}
\newcommand{\Proj}[2]{\mathrm{Proj}_{{#1}}\left({#2}\right)}
\newcommand{\Gcal}{\mathcal{G}}
\newcommand{\Gb}{\mathbf{G}}
\newcommand{\detb}[1]{\det\left({#1}\right)}
\newcommand{\tranb}[1]{{\left({#1}\right)}^T}
\newcommand{\br}[1]{\left({#1}\right)}
\newcommand{\inv}[1]{{\left(#1\right)}^{-1}}
\newcommand{\inve}[1]{{#1}^{-1}}
\newcommand{\tr}[1]{\mathrm{tr}\left( {#1} \right)}
\newcommand{\risk}{\alpha}
\newcommand{\nf}{r}
\newcommand{\nc}{n_u}
\newcommand{\ns}{n_s}
\newcommand{\nn}{p}
\newcommand{\costu}{R}
\newcommand{\costtraj}{\mathcal{L}}
\newcommand{\costt}{\ell}
\newcommand{\cost}{\costt_t}
\newcommand{\costfin}{\costt_f}
\newcommand{\R}{\mathbf{R}}
\newcommand{\s}{x}
\newcommand{\noisetraj}{\pmb{\epsilon}}
\newcommand{\traj}{\mathbf{x}}
\newcommand{\ug}{\mathit{u}}
\newcommand{\Xspace}{\mathcal{X}}%
\newcommand{\ytraj}{\mathbf{y}}
\newcommand{\noiseu}{\omega}
\newcommand{\noise}{\omega}
\newcommand{\noisew}{\epsilon}
\DeclareMathOperator{\T}{\mathcal{T}}
\DeclareMathOperator{\Tex}{N_e}
\newcommand{\Nh}{N}
\newcommand{\Detff}[4]{\dynf\left({#1},{#2},{#3},{#4}\right)}
\newcommand{\Pn}{\mathbb{P}_{\noisew}}
\newcommand{\dynf}{\mathcal{F}}
\newcommand{\feats}{\phi}
\newcommand{\feat}[2]{\feats\left({#1},{#2}\right)}
\DeclareMathOperator{\K}{K}
\newcommand{\Kb}{\mathbf{K}}
\newcommand{\C}{\mathcal{C}}
\newcommand{\FHinf}[1]{q_\infty}
\newcommand{\FHtwo}[1]{q_2}
\newcommand{\FHone}[1]{q_1}
\newcommand{\Unoise}{\Sigma}
\newcommand{\Unoisei}{S}
\newcommand{\mup}{\overline{m}}
\title{Universal Convexification via Risk-Aversion}
\author{} 
\author{ {\bf Krishnamurthy Dvijotham} \\
Dept of Computer Science \& Engg\\
University of Washington\\
Seattle, WA 98195 \\
\And
{\bf Maryam Fazel}  \\
Dept of Electrical Engg          \\
University of Washington \\
Seattle, WA 98195
\And
{\bf Emanuel Todorov}   \\
Dept of Computer Science \& Engg \\
Dept of Applied Mathematics \\
University of Washington    \\
Seattle, WA 98195\\
}
\newcommand{\ygu}{y}
\newcommand{\opta}[1]{{#1}^{\ast}_{\risk}}
\begin{document}

\maketitle

\begin{abstract}
We develop a framework for convexifying a fairly general class of optimization problems. Under additional assumptions, we analyze the suboptimality of the solution to the convexified problem relative to the original nonconvex problem and prove additive approximation guarantees. We then develop algorithms based on stochastic gradient methods to solve the resulting optimization problems and show bounds on convergence rates. 
We then extend this framework to apply to a general class of discrete-time dynamical systems. In this context, our convexification approach falls under the well-studied paradigm of risk-sensitive Markov Decision Processes. We derive the first known model-based and model-free policy gradient optimization algorithms with guaranteed convergence to the optimal solution. Finally, we present numerical results validating our formulation in different applications.
\end{abstract}

\section{INTRODUCTION}
It has been said that the ``the great watershed in optimization isn't between linearity and nonlinearity, but convexity and nonconvexity''. In this paper, we describe a framework for convexifying a fairly general class of optimization problems (section \ref{sec:SoftCOpt}), turning them into problems that can be solved with efficient convergence guarantees. The convexification approach may change the problem drastically in some cases, which is not surprising since most nonconvex optimization problems are NP-hard and cannot be reduced to solving convex optimization problems. However, under additional assumptions, we can make guarantees that bound the suboptimality of the solution found by solving the convex surrogate relative to the optimum of the original nonconvex problem (section \ref{sec:SubOpt}). We adapt stochastic gradient methods to solve the resulting problems (section \ref{sec:Algorithms}) and prove convergence guarantees that bound the distance to optimality as a function of the number of iterations. In section \ref{sec:Control}, we extend the framework to arbitrary dynamical systems and derive the first known (to the best of our knowledge) policy optimization approach with guaranteed convergence to the global optimum. The control problems we study fall under the classical framework of risk-sensitive control and the condition required for convexity is a natural one relating the control cost, risk factor and noise covariance. It is very similar to the condition used in path integral control \citep{Fleming82,Kappen05} and subsequent iterative algorithms (the $\mathrm{PI}^2$ algorithm \citep{Theodorou2010}). In section \ref{sec:Numerical}, we present numerical results illustrating the applications of our framework to various problems.

\subsection{RELATED WORK}

Smoothing with noise is a relatively common approach that has been used extensively in various applications to simplify difficult optimization problems. It has been used in computer vision \citep{Rangarajan90generalisedgraduated} heuristically and formalized in recent work \citep{mobahi2012phd} where it is shown that under certain assumptions, adding sufficient noise eventually leads to a convex optimization problem. In this work, however, we show that for any noise level, one can choose the degree of risk-aversion in order to obtain a convex problem. In the process, one may destroy minima that are not robust to perturbations, but in several applications, it makes sense to find a robust solution, rather than the best one. Our work can be extended to control of dynamical systems (section \ref{sec:Control}). In this setting, the condition for convexity is closely related to the one that arises in path-integral control \citep{Fleming82,Kappen05}. Closely related work for  discrete-time was developed in \citep{Todorov09} and extensions and applications have been proposed in subsequent papers \citep{Theodorou2010,KappenRisk,Dvijotham2011,Toussaint09}.

\section{NOTATION}
Gaussian random variables are denoted by $\noise \sim \Gauss{\mu}{\Sigma}$, where $\mu$ is the mean and $\Sigma$ the covariance matrix. Given a random variable $\noise \in \Omega$ with distribution $P$ and a function $h: \Omega \mapsto \R$, the expected value of the random variable $h\br{\noise}$ is written as $\displaystyle\ExP{\noise \sim P}{h\br{\noise}}$. Whenever it is clear from the context, we will drop the subscript on the expected value and simply write $\ExP{}{h\br{\noise}}$. We differ from convention here slightly by denoting random variables with lowercase letters and reserve uppercase letters for matrices. We will frequently work with Gaussian perturbations of a function, so we denote:
\[\fpert{h}{\Sigma}{\theta}=\ExP{\noise \sim \Gauss{0}{\Sigma}}{h\br{\theta+\noise}},\fpertr{h}{\omega}{\theta}=h\br{\theta+\noise}-\fpert{h}{\Sigma}{\theta}.\]
$I_n$ denotes the $n \times n$ identity matrix. Unless stated otherwise, $\norm{\cdot}$ refers to the Euclidean $\ell_2$ norm. Given a convex set $\C$, we denote the projection onto $\C$ of $x$ by $\Proj{\C}{x}$: $\Proj{\C}{x}=\argmin_{y \in \C} \norm{y-x}$.
The maximum eigenvalue of a symmetric matrix $M$ is denoted by $\lama{M}$. Given symmetric matrices $A,B$, the notation $A \succeq B$ means that $A-B$ is a positive semidefinite matrix. Given a positive definite matrix $M \succ 0$, we denote the metric induced by $M$ as ${\|x\|}_M= \sqrt{\tran{x}M x}$.

\section{GENERAL OPTIMIZATION PROBLEMS}\label{sec:General}
\label{sec:SoftCOpt}
We study optimization problems of the form:
\begin{align}
\min_{\theta \in \C} g\br{\theta} \label{eq:OptGeneral}
\end{align}
where $g$ is an arbitrary function  and $\C \subset \R^k$ is a convex set. We do not assume that $g$ is convex so the above problem could be a nonconvex optimization problem. In this work, we convexify this problem by decomposing $g\br{\theta}$ as follows:
$g(\theta)=f(\theta)+\frac{1}{2}\tran{\theta}R\theta$
and perturbing $f$ with Gaussian noise. Optimization problems of this form are very common in machine learning (where $R$ corresponds to a regularizer) and control (where $R$ corresponds to a control cost). The convexified optimization problem is:
\begin{align}
\min_{\theta\in \C} \logpaw{\Sigma}{\theta}{\risk}{f}+\frac{1}{2}\tran{\theta}R\theta \label{eq:SoftConvex}
\end{align}
where $ R \succeq 0$ and \[\logpaw{\Sigma}{\theta}{\risk}{f} = \frac{1}{\risk}\logb{\ExP{\noise \sim \Gauss{0}{\Sigma}}{\expb{\risk f\br{\theta+\noise}}}}.\]
This kind of objective is common in risk-averse optimization. To a first order Taylor expansion in $\risk$, the above objective is equal to $\ExP{}{f\br{\theta+\noise}}+\risk\Var{}{f\br{\theta+\noise}}$, indicating that increasing $\risk$ will make the solution more robust to Gaussian perturbations.
$\risk$ is called the risk-factor and is a measure of the risk-aversion of the decision maker. Larger values of $\risk$ will reject solutions that are not robust to Gaussian perturbations.

In order that the expectation exists, we require that $f$ is bounded above:
\begin{align}
\text{ For all $\theta \in \R^k$, } f\br{\theta} \leq M < \infty. \label{eq:BoundedGrowth}
\end{align}
We implicitly make this assumption throughout this paper in all the stated results. Note that this is not a very restrictive assumption, since, given any function $g$ with a finite minimum, one can define a new objective $g^\prime=\min\br{g,\bar{m}}$, where $\bar{m}$ is an upper bound on the minimum (say the value of the function at some point), without changing the minimum. Since the convex quadratic is non-negative, $f$ is also bounded above by $\bar{m}$ and hence $0 < \expb{\risk f\br{\theta+\noise}}\leq \expb{\risk \mup}$. This ensures that $\logpaw{\Sigma}{\theta}{\risk}{f}$ is finite. Some results will require differentiability, and  we can preserve this by defining $g^\prime$ using a soft-min: For example,  $g^\prime\br{x}=\bar{m}\tanh\br{\frac{g\br{x}}{\bar{m}}}$.

\begin{theorem}[Universal Convexification]\label{thm:ConvexityMain}
The optimization problem \eqref{eq:SoftConvex} is a convex optimization problem whenever $\C$ is a convex set and $\risk R \succeq \inve{\Sigma}$.
\end{theorem}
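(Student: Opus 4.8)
The plan is to prove that the objective in \eqref{eq:SoftConvex} is a convex function of $\theta$ on all of $\R^k$; since $\C$ is convex, the problem is then a convex program. The quadratic $\tfrac12\tran{\theta}R\theta$ is already convex because $R\succeq 0$, so the entire difficulty sits in the risk-sensitive term $\logpaw{\Sigma}{\theta}{\risk}{f}$, which need not be convex on its own. The point of the argument will be that the hypothesis $\risk R \succeq \inve{\Sigma}$ is exactly what lets the quadratic absorb the non-convex part of the risk term.

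First I would change variables $\eta = \theta + \noise$ inside the Gaussian expectation so that the $\theta$-dependence moves out of the argument of $f$ and into the Gaussian density, and then complete the square in the exponent. Expanding $-\tfrac12\tranb{\eta-\theta}\inve{\Sigma}\br{\eta-\theta}$ separates it into a factor independent of $\theta$, a bilinear coupling $\tran{\theta}\inve{\Sigma}\eta$, and a purely quadratic piece $-\tfrac12\tran{\theta}\inve{\Sigma}\theta$. Pulling the last (constant-in-$\eta$) factor out of the integral and taking $\tfrac1\risk\log$, the objective splits as
\[\frac{1}{\risk}\logb{\int c\br{\eta}\,\expb{\tranb{\inve{\Sigma}\theta}\eta}\,\der\eta} + \frac12 \tran{\theta}\br{R - \tfrac1\risk \inve{\Sigma}}\theta,\]
where $c\br{\eta}\geq 0$ collects all the $\theta$-free factors and is integrable thanks to the boundedness assumption $f \leq M$ (which makes $\expb{\risk f}$ bounded, so the Gaussian factor controls the integral).

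The second term is a quadratic form with matrix $R - \tfrac1\risk\inve{\Sigma}$, which is positive semidefinite precisely when $R \succeq \tfrac1\risk\inve{\Sigma}$, i.e. (multiplying through by $\risk>0$) when $\risk R \succeq \inve{\Sigma}$; this is exactly the hypothesis, so the term is convex. For the first term I would show that $u \mapsto \logb{\int c\br{\eta}\expb{\tran{u}\eta}\der\eta}$ is convex and then compose with the linear map $u = \inve{\Sigma}\theta$, which preserves convexity, and finally divide by $\risk>0$. Convexity of this log-integral (a cumulant-generating / log-partition function) is the one genuinely analytic step: for $\lambda\in[0,1]$ and any $u_1,u_2$ I would write the integrand at $\lambda u_1 + (1-\lambda)u_2$ as a product of powers and apply H\"older's inequality with conjugate exponents $1/\lambda$ and $1/(1-\lambda)$, bounding the integral by the product of the two endpoint integrals raised to $\lambda$ and $1-\lambda$; taking logarithms yields exactly the convexity inequality.

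The objective is then a sum of two convex functions, hence convex, and minimizing it over the convex set $\C$ gives a convex program. I expect the only real obstacle to be the log-partition convexity step; the rest is bookkeeping, namely the change of variables and checking that the leftover quadratic $\tfrac12\tran{\theta}\br{R-\tfrac1\risk\inve{\Sigma}}\theta$ matches the stated semidefinite condition. The uniform bound $f\leq M$ provides a dominating integrand, so finiteness of the integrals and the validity of H\"older's inequality are not a concern.
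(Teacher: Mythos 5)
Your proposal is correct, and it reorganizes the paper's argument rather than reproducing it: both proofs rest on the same completion of squares and, at bottom, on H\"older's inequality, but they spend the hypothesis $\risk R \succeq \inve{\Sigma}$ in different places and invoke different convexity lemmas. The paper exponentiates the whole scaled objective and observes that the exponent, $\frac{1}{2}\tran{\theta}\br{\risk R-\inve{\Sigma}}\theta+\tran{\theta}\inve{\Sigma}\ygu+\risk f\br{\ygu}-\frac{1}{2}\tran{\ygu}\inve{\Sigma}\ygu$, is a convex quadratic in $\theta$ for each fixed $\ygu$; it then cites the composition result of Boyd and Vandenberghe that $\theta \mapsto \log \int \expb{h\br{\theta,\ygu}}\diff{\ygu}$ is convex whenever $h\br{\cdot,\ygu}$ is convex for every $\ygu$. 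You instead pull the $\theta$-only quadratic outside the logarithm, so the hypothesis is used to make the explicit residual quadratic $\frac{1}{2}\tran{\theta}\br{R-\frac{1}{\risk}\inve{\Sigma}}\theta$ positive semidefinite, and what remains under the log has only an affine dependence on $\theta$ in the exponent. The analytic core then shrinks to the classical convexity of a log-partition (cumulant generating) function in its linear parameter, which you prove directly by H\"older with exponents $1/\lambda$ and $1/(1-\lambda)$, followed by precomposition with the affine map $\theta \mapsto \inve{\Sigma}\theta$ and division by $\risk>0$. Your route buys self-containedness and a strictly weaker key lemma (only the affine-exponent case of log-integral-exp convexity is needed, which is exactly the H\"older step you sketch); the paper's route buys brevity (no change of variables or splitting, just a pointwise convexity check of the exponent plus a citation). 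Your integrability remark --- the bound $f\leq M$ gives Gaussian domination of $c\br{\eta}$, so all integrals are finite and H\"older applies --- closes the only loose end, and both arguments implicitly use $\risk>0$, which is harmless since $\risk$ is a positive risk factor.
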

\begin{proof}
Writing out the objective function in \eqref{eq:SoftConvex} and scaling by $\risk$, we get:
\begin{align}
\logb{\ExP{\noise \sim \Gauss{0}{\Sigma}}{\expb{\risk f(\theta+\noise)}}}+\frac{\risk}{2}\tran{\theta}R\theta\label{eq:Thm1s1}
\end{align}
Writing out the expectation, we have:
\begin{align*}
& \ExP{\noise \sim \Gauss{0}{\Sigma}}{\expb{\risk f(\theta+\noise)}} = \ExP{\ygu \sim \Gauss{\theta}{\Sigma}}{\expb{\risk f(\ygu)}} \\
& \propto \int \expb{-\frac{1}{2}\tranb{\ygu-\theta}\inve{\Sigma}\br{\ygu-\theta}}\expb{\risk f(\ygu)}\diff{\ygu}
\end{align*}
where we omitted the normalizing constant ${\br{\sqrt{\powb{2\pi}{n}\detb{\Sigma}}}}^{-1}$ in the last step. Thus, exponentiating \eqref{eq:Thm1s1}, we get (omitting the normalizing constant):
\[\int \expb{\frac{\risk}{2} \tran{\theta}R\theta-\frac{1}{2}\tranb{\ygu-\theta}\inve{\Sigma}\br{\ygu-\theta}+\risk f(\ygu)}\diff{\ygu}\]
The term inside the exponent can be rewritten as
\[\frac{1}{2}\tran{\theta}\br{\risk R-\inve{\Sigma}}\theta+\tran{\theta}\inve{\Sigma}\ygu+\risk f\br{\ygu}-\frac{1}{2}\tran{\ygu}\inve{\Sigma}\ygu\]
Since $\alpha R\succeq \inve{\Sigma}$, this is a convex quadratic function of $\theta$ for each $\ygu$. Thus, the overall objective is the composition of a convex and increasing function $\log{\ExP{}{\expb{\cdot}}}$ and a convex quadratic and is hence convex \citep{Boyd04}. Since $\C$ is convex, the overall problem is a convex optimization problem.
\end{proof}

\subsection{INTERPRETATION}

Theorem \ref{thm:ConvexityMain} is a surprising result, since the condition for convexity does not depend in any way on the properties of $f$ (except for the bounded growth assumption \eqref{eq:BoundedGrowth}), but only on the relationship between the quadratic objective $R$, the risk factor $\risk$ and the noise level $\Sigma$. In this section, we give some intuition behind the result and describe why it is plausible that this is true.

In general, arbitrary nonconvex optimization problems can be very challenging to solve. As a worst case example, consider a convex quadratic function $g(x)=x^2$ that is perturbed slightly: At some point where the function value is very large (say $x=100$), we modify the function so that it suddenly drops to a large negative value (lower than the global minimum 0 of the convex quadratic). By doing this perturbation over a small finite interval , one can preserve differentiability while introducing a new global minimum far away from the original global minimum. In this way, one can create difficult optimization problems that cannot be solved using gradient descent  methods, unless initialized very carefully.

The work we present here does not solve this problem: In fact, it will not find a global minimum of the form created above. The risk-aversion introduced destroys this minimum, since small perturbations cause the function to increase rapidly, ie, $g\br{\opt{\theta}+\noiseu}\gg g\br{\opt{\theta}}$, so that the objective \eqref{eq:SoftConvex} becomes large. Instead, it will find a ``robust'' minimum, in the sense that Gaussian perturbations around the minimum do not increase the value of the objective by much. This intuition is formalized by theorem \ref{thm:GenSubOpt}, which bounds the suboptimality of the convexified solution relative to the optimal solution of the original problem in terms of the sensitivity of $f$ to Gaussian perturbations around the optimum.

In figure \ref{fig:ConvexEffect}, we illustrate the effect of the convexification for a 1-dimensional optimization problem. The blue curve represents the original function $g\br{\theta}$. It has 4-local minima in the interval $(-3,3)$. Two of the shallow minima are eliminated by smoothing with Gaussian noise to get $\fpert{g}{\Sigma}{\theta}$. However, there is a deep but narrow local minimum that remains even after smoothing. Making the problem convex using risk-aversion and theorem \ref{thm:ConvexityMain} leads to the green curve that only preserves the robust minimum as the unique global optimum.

\begin{figure}[htb]
\begin{center}
\includegraphics[width=.6\columnwidth]{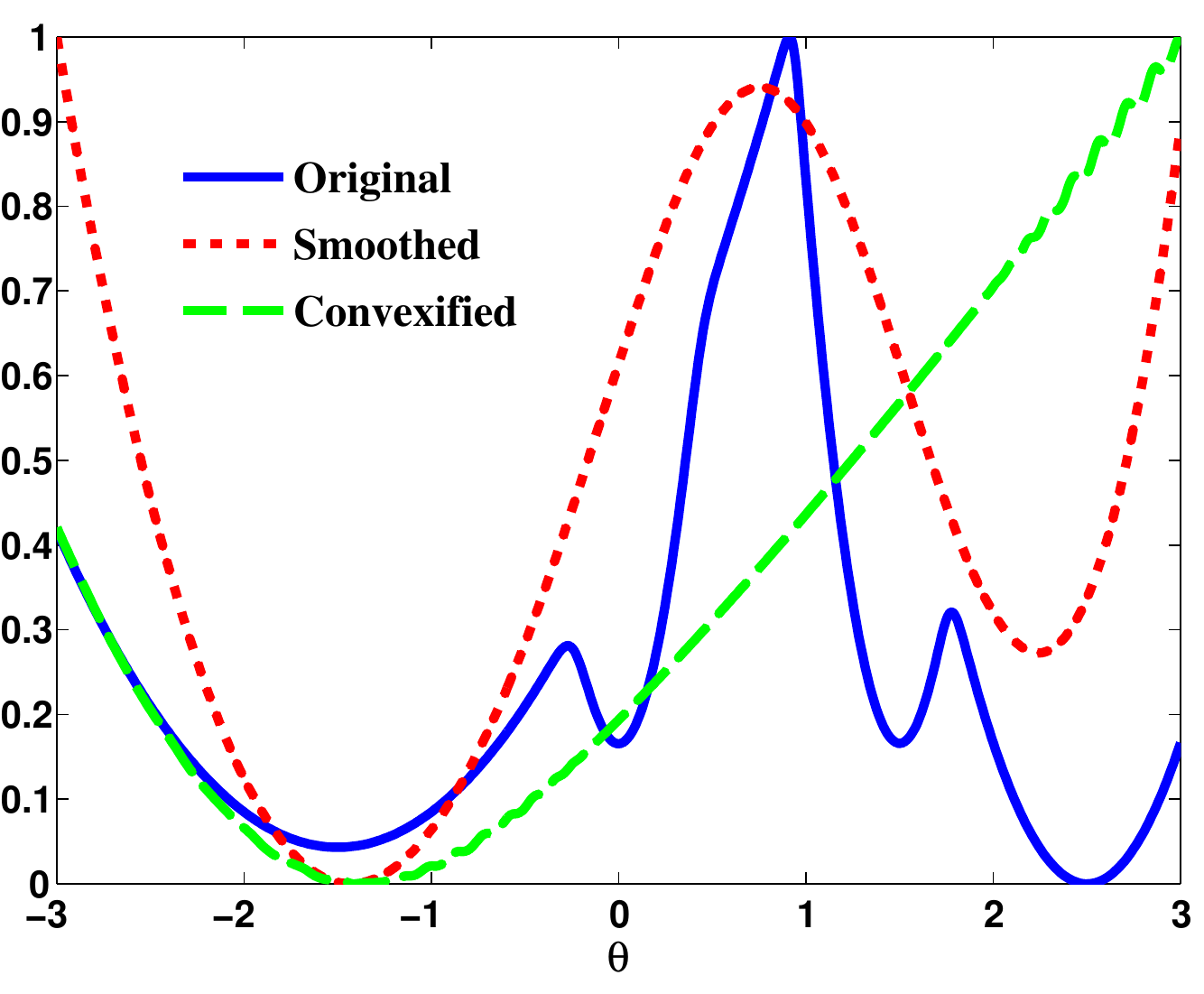}
\end{center}
\caption{Illustration of Convexification for a 1-dimensional optimization problem}\label{fig:ConvexEffect}
\end{figure}


\subsection{ANALYSIS OF SUB-OPTIMALITY}\label{sec:SubOpt}

We have derived a convex surrogate for a very general class of optimization problems. However, it is possible that the solution to the convexified problem is drastically different from that of the original problem and the convex surrogate we propose is a poor approximation. Given the hardness of general non-convex optimization, we do not expect the two problems to have close solutions without additional assumptions. In this section, we analyze the gap between the original and convexified problem, that is, we answer the question: Can we construct a reasonable approximation to the original (potentially nonconvex) problem based on the solution to the perturbed convex problem? In order to answer this, we first define the sensitivity function, which quantifies the gap between \eqref{eq:OptGeneral} and \eqref{eq:SoftConvex}.

\begin{definition}[Sensitivity Function]
The sensitivity function of $f$ at noise level $\Sigma$, risk $\risk$ is defined as:
\begin{align*}
\Sen{f}{\risk}{\Sigma}{\theta} & =\frac{1}{\risk}\logb{\displaystyle\ExP{}{\expb{\risk\br{f\br{\theta+\noise}-\fpert{f}{\Sigma}{\theta}}}}} \\
                               & = \frac{1}{\risk}\logb{\ExP{}{\expb{\risk \fpertr{f}{\noise}{\theta}}}}.
\end{align*}
This is a measure of how sensitive $f$ is to Gaussian perturbations  at $\theta$.

\begin{theorem}[Suboptimality Analysis]\label{thm:GenSubOpt}
Let $g(\theta)=f(\theta)+\frac{1}{2}\tran{\theta}R\theta$ and define:
\[\opta{\theta}=\argmin_{\theta \in \C} \logpaw{\Sigma}{\theta}{\risk}{f}+\frac{1}{2}\tran{\theta}R\theta, \quad \opt{\theta}=\argmin_{\theta \in \C}\fpert{g}{\Sigma}{\theta}.\]
Then,
\begin{align*}
&\fpert{g}{\Sigma}{\opta{\theta}}-\fpert{g}{\Sigma}{\opt{\theta}}\leq \Sen{f}{\risk}{\Sigma}{\theta}
\end{align*}
\begin{proof}
We have that $\forall \theta$,
$\fpert{g}{\Sigma}{\theta}= \fpert{f}{\Sigma}{\theta}+\frac{1}{2}\tran{\theta}R\theta+\frac{1}{2}\tr{\Sigma R}$. From the convexity of the function $y \to \expb{\alpha y}$ and Jensen's inequality, we have
\begin{align*}
& \fpert{g}{\Sigma}{\opta{\theta}}  = \fpert{f}{\Sigma}{\opta{\theta}}+\frac{1}{2}\tran{\opta{\theta}}R\opta{\theta}+\frac{1}{2}\tr{\Sigma R} \\
& \leq \logpaw{\Sigma}{\opta{\theta}}{\risk}{f}+\frac{1}{2}\tran{\opta{\theta}}R\opta{\theta}+\frac{1}{2}\tr{\Sigma R} 
\end{align*}
Since $\opta{\theta}=\argmin_{\theta \in \C} \logpaw{\Sigma}{\theta}{\risk}{f}+\frac{1}{2}\tran{\theta}R\theta$, for any $\theta \in \C$, we have
\begin{align*}
\fpert{g}{\Sigma}{\opta{\theta}} &\leq \logpaw{\Sigma}{\theta}{\risk}{f}+\frac{1}{2}\tran{\theta}R\theta+\frac{1}{2}\tr{\Sigma R} \\
& = \Sen{f}{\risk}{\Sigma}{\theta} +\fpert{g}{\Sigma}{\theta}.
\end{align*}
 The result follows by plugging in $\theta=\opt{\theta}$ and subtracting $\fpert{g}{\Sigma}{\opt{\theta}}$ from both sides.
\end{proof}

\end{theorem}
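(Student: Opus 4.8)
The plan is to sandwich the smoothed objective $\fpert{g}{\Sigma}{\opta{\theta}}$ between the convexified objective evaluated at $\opta{\theta}$ and that same objective evaluated at an arbitrary comparison point, and then translate the resulting bound back into the smoothed objective plus a residual that I will identify as the sensitivity function. First I would record how Gaussian smoothing acts on the quadratic part: since $\noise$ has mean zero and covariance $\Sigma$, the cross term vanishes and $\ExP{\noise \sim \Gauss{0}{\Sigma}}{\frac{1}{2}\tranb{\theta+\noise}R\br{\theta+\noise}}=\frac{1}{2}\tran{\theta}R\theta+\frac{1}{2}\tr{\Sigma R}$, so that $\fpert{g}{\Sigma}{\theta}=\fpert{f}{\Sigma}{\theta}+\frac{1}{2}\tran{\theta}R\theta+\frac{1}{2}\tr{\Sigma R}$ for every $\theta$. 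The additive constant $\frac{1}{2}\tr{\Sigma R}$ is independent of $\theta$ and will eventually cancel in the final difference, so it can be carried along passively throughout.

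Next I would establish the single inequality that drives the whole argument. By convexity of $y\mapsto\expb{\risk y}$ and Jensen's inequality, $\ExP{}{\expb{\risk f\br{\theta+\noise}}}\geq\expb{\risk\fpert{f}{\Sigma}{\theta}}$; taking logarithms and dividing by $\risk>0$ gives the pointwise domination $\fpert{f}{\Sigma}{\theta}\leq\logpaw{\Sigma}{\theta}{\risk}{f}$, i.e. the risk-averse surrogate lies above the plain smoothing. Applying this at $\theta=\opta{\theta}$ and restoring the quadratic and trace terms yields $\fpert{g}{\Sigma}{\opta{\theta}}\leq\logpaw{\Sigma}{\opta{\theta}}{\risk}{f}+\frac{1}{2}\tran{\opta{\theta}}R\opta{\theta}+\frac{1}{2}\tr{\Sigma R}$. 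Since $\opta{\theta}$ is by definition a minimizer of the convexified objective over $\C$, the right-hand side is no larger than the convexified objective at any $\theta\in\C$, so $\fpert{g}{\Sigma}{\opta{\theta}}\leq\logpaw{\Sigma}{\theta}{\risk}{f}+\frac{1}{2}\tran{\theta}R\theta+\frac{1}{2}\tr{\Sigma R}$.

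Finally I would identify the gap explicitly. Factoring $\expb{-\risk\fpert{f}{\Sigma}{\theta}}$ out of the definition of the sensitivity function shows $\logpaw{\Sigma}{\theta}{\risk}{f}=\fpert{f}{\Sigma}{\theta}+\Sen{f}{\risk}{\Sigma}{\theta}$; substituting this and re-collecting the quadratic and trace terms back into $\fpert{g}{\Sigma}{\theta}$ gives $\fpert{g}{\Sigma}{\opta{\theta}}\leq\fpert{g}{\Sigma}{\theta}+\Sen{f}{\risk}{\Sigma}{\theta}$, valid for every $\theta\in\C$. Choosing $\theta=\opt{\theta}$ and subtracting $\fpert{g}{\Sigma}{\opt{\theta}}$ from both sides delivers the claim.

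I do not anticipate a genuine obstacle here, as the argument is a clean three-link chain rather than a hard estimate. The only points demanding care are keeping Jensen's inequality oriented correctly, which hinges on $\risk>0$ so that dividing the logarithm by $\risk$ preserves the direction, and verifying that the constant $\frac{1}{2}\tr{\Sigma R}$ truly cancels rather than leaking into the final bound. The conceptual crux is simply recognizing the chain: smoothed-at-$\opta{\theta}$ $\leq$ convex-at-$\opta{\theta}$ $\leq$ convex-at-$\theta$ $=$ smoothed-at-$\theta$ plus sensitivity.
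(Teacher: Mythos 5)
Your proposal is correct and follows essentially the same argument as the paper: the smoothing identity $\fpert{g}{\Sigma}{\theta}=\fpert{f}{\Sigma}{\theta}+\frac{1}{2}\tran{\theta}R\theta+\frac{1}{2}\tr{\Sigma R}$, Jensen's inequality giving $\fpert{f}{\Sigma}{\theta}\leq\logpaw{\Sigma}{\theta}{\risk}{f}$, optimality of $\opta{\theta}$, and the identification $\logpaw{\Sigma}{\theta}{\risk}{f}=\fpert{f}{\Sigma}{\theta}+\Sen{f}{\risk}{\Sigma}{\theta}$ followed by evaluation at $\theta=\opt{\theta}$. The only difference is presentational: you make the sensitivity identity explicit as a separate step, whereas the paper folds it into the final equality of its chain.
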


\begin{remark}
Although we only prove suboptimality relative to the optimal solution of a smoothed version of $g$, we can extend the analysis to the optimal solution of $g$ itself. Define $\opt{\theta}=\argmin_{\theta \in \C} g\br{\theta}$. We can prove that
\begin{align*}
g\br{\opta{\theta}}-g\br{\opt{\theta}} &\leq \br{\fpert{g}{\Sigma}{\opt{\theta}}-g\br{\opt{\theta}}}+\br{g\br{\opta{\theta}}-\fpert{g}{\Sigma}{\opta{\theta}}} \\
& \quad + \Sen{f}{\risk}{\Sigma}{\theta}
\end{align*}
Assuming that $g$ changes slowly around $\opt{\theta}$ and $\opta{\theta}$ (indicative of the fact that $\opt{\theta}$ is a ``robust'' minimum and $\opta{\theta}$ is the minimum of a robustified problem), we can bound the first term. We leave a precise analysis for future work.
\end{remark}



\end{definition}

\subsubsection{BOUNDING THE SENSITIVITY FUNCTION}
The sensitivity function is exactly the moment generating function of the $0$-mean random variable $\fpertr{f}{\noise}{\theta}$.
Several techniques have been developed for bounding moment generation functions in the field of concentration inequalities \citep{boucheron2013concentration}. Using these techniques, we can bound the moment generating function (i.e. the sensitivity function) under the assumption that $f$ is Lipschitz-continuous. Before stating the lemma, we state a classical result:
\begin{theorem}[Log-Sobolev Inequality, \citep{boucheron2013concentration}, theorem 5.4]\label{thm:LogSobolev}
Let $\omega \sim \Gauss{0}{\sigma^2 I}$ and $f$ be any continuously differentiable function of $\omega$. Then,
\[\ExP{}{{f\br{\omega}}^{2}\logb{\frac{{f\br{\omega}}^{2}}{\ExP{}{f\br{\omega}^{2}}}}} \leq 2 \sigma^2   \ExP{}{\norm{\nabla f\br{\omega}}^2}.\]
Further, if $f$ is Lipschitz with Lipschitz constant $L$, then
\[\ExP{}{\expb{\risk\br{f\br{\noise}-\ExP{}{f\br{\noise}}}}} \leq \expb{\risk^2 L^2\sigma^2/2}.\]
Finally, we can also prove that if f is differentiable and $\ExP{}{\exp{\lambda \norm{\nabla f\br{\noise}}^2 }}<\infty$ for all $\lambda<\lambda_0$, then, given $\eta$ such that $\lambda\sigma^2<\eta\lambda_0$ and $\lambda \eta <2$, we have
\begin{align*}
&\logb{\ExP{}{\expb{\lambda\br{f\br{\noise}-\ExP{}{f\br{\noise}}}}}} \leq \\
&\frac{\lambda \eta}{2-\lambda \eta} \logb{\ExP{}{\expb{\frac{\lambda\sigma^2}{\eta} \norm{\nabla f\br{\noise}}^2}}}
\end{align*}
\end{theorem}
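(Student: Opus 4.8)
The plan is to prove the three assertions in sequence, treating the Gaussian log-Sobolev inequality (LSI) as the core estimate and deriving the two moment-generating-function bounds from it by the Herbst argument and a variational decoupling. Throughout I write $\mathrm{Ent}\br{Z}=\ExP{}{Z\log Z}-\ExP{}{Z}\log\ExP{}{Z}$ for a nonnegative $Z$, so the first claim is $\mathrm{Ent}\br{f^2}\le 2\sigma^2\ExP{}{\norm{\nabla f}^2}$ with $\omega\sim\Gauss{0}{\sigma^2 I}$, and I abbreviate $H\br{\lambda}=\ExP{}{\expb{\lambda f}}$.

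For the LSI I would use the Ornstein--Uhlenbeck semigroup $\br{P_t}_{t\ge 0}$ with invariant law $\Gauss{0}{\sigma^2 I}$, whose generator $\DiffG$ obeys the integration-by-parts identity $\ExP{}{g\,\DiffG h}=-\sigma^2\ExP{}{\inner{\nabla g}{\nabla h}}$. Taking $h=f^2$ and $\phi\br{t}=\ExP{}{\br{P_t h}\log\br{P_t h}}$, the fact that $P_t h\to\ExP{}{h}$ gives $\mathrm{Ent}\br{h}=\phi\br{0}-\phi\br{\infty}=-\int_0^\infty\phi'\br{t}\,dt$. Differentiating and integrating by parts yields $\phi'\br{t}=-\sigma^2\ExP{}{\norm{\nabla P_t h}^2/P_t h}$, and the commutation $\nabla P_t=e^{-t}P_t\nabla$ together with the Cauchy--Schwarz inequality for the averaging operator $P_t$ gives $\norm{\nabla P_t h}^2/P_t h\le e^{-2t}P_t\br{\norm{\nabla h}^2/h}$; integrating $e^{-2t}$ over $\sqb{0,\infty}$ produces the factor $\tfrac12$. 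Substituting $h=f^2$, where $\norm{\nabla h}^2/h=4\norm{\nabla f}^2$, gives exactly $\mathrm{Ent}\br{f^2}\le 2\sigma^2\ExP{}{\norm{\nabla f}^2}$. Gross's tensorization-plus-CLT route from the two-point inequality is an alternative I would only mention.

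For the Lipschitz bound I run the Herbst argument: apply the LSI with test function $\expb{\risk f/2}$, so that $\norm{\nabla\expb{\risk f/2}}^2=\tfrac{\risk^2}{4}\expb{\risk f}\norm{\nabla f}^2\le\tfrac{\risk^2 L^2}{4}\expb{\risk f}$. The LSI then reads $\risk H'\br{\risk}-H\br{\risk}\log H\br{\risk}\le\tfrac{\sigma^2 L^2\risk^2}{2}H\br{\risk}$; dividing by $\risk^2 H\br{\risk}$ turns this into $K'\br{\risk}\le\sigma^2 L^2/2$ for $K\br{\risk}=\tfrac1\risk\log H\br{\risk}$. Since $K\br{0^+}=\ExP{}{f}$, integrating over $\sqb{0,\risk}$ gives $\tfrac1\risk\log H\br{\risk}\le\ExP{}{f}+\tfrac{\sigma^2 L^2\risk}{2}$, which is the claimed sub-Gaussian bound after multiplying by $\risk$ and exponentiating.

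The third bound follows the same template but retains $\norm{\nabla f}^2$ rather than bounding it by $L^2$. The extra ingredient is the dual form of entropy, $\ExP{}{XV}\le\mathrm{Ent}\br{X}+\ExP{}{X}\log\ExP{}{e^{V}}$, applied with $X=\expb{\lambda f}$ and $V=s\norm{\nabla f}^2$, which decouples the term $\ExP{}{\expb{\lambda f}\norm{\nabla f}^2}$ appearing in the LSI into $\mathrm{Ent}\br{\expb{\lambda f}}$ plus $H\br{\lambda}\log\ExP{}{\expb{s\norm{\nabla f}^2}}$. Choosing $s=\lambda\sigma^2/\eta$ and moving the resulting $\tfrac{\lambda\eta}{2}\mathrm{Ent}\br{\expb{\lambda f}}$ term to the left (valid because $\lambda\eta<2$) yields $\mathrm{Ent}\br{\expb{\lambda f}}\le\frac{\lambda\eta}{2-\lambda\eta}H\br{\lambda}\log\ExP{}{\expb{\frac{\lambda\sigma^2}{\eta}\norm{\nabla f}^2}}$, the condition $\lambda\sigma^2<\eta\lambda_0$ ensuring the inner generating function is finite; converting via $K'\br{\lambda}=\mathrm{Ent}\br{\expb{\lambda f}}/\br{\lambda^2 H\br{\lambda}}$ and integrating, using that $\lambda\mapsto\tfrac1\lambda\log\ExP{}{\expb{\lambda\sigma^2\norm{\nabla f}^2/\eta}}$ is nondecreasing to pull the gradient cumulant out of the integral, produces the stated inequality. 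I expect the main obstacle to be the LSI of the second paragraph --- justifying differentiation under the expectation, the convergence $P_t h\to\ExP{}{h}$, and the Cauchy--Schwarz step for $P_t$ are the genuinely delicate points --- whereas the Herbst and decoupling steps are essentially bookkeeping, the only subtlety there being that matching the exact constant $\tfrac{\lambda\eta}{2-\lambda\eta}$ in part three requires invoking the monotonicity of the scaled cumulant when integrating.
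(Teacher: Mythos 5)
Your proposal is essentially correct, but note that the paper itself offers no proof of this statement: it is quoted wholesale from Boucheron--Lugosi--Massart, so the only meaningful comparison is with that book's arguments. Against that benchmark, your part one takes a genuinely different route: the cited text proves the Gaussian LSI by tensorizing the two-point Bernoulli inequality and passing to the limit via the central limit theorem (Gross's argument), whereas you use Ornstein--Uhlenbeck semigroup interpolation (Bakry--\'Emery); both are standard, yours being more self-contained in continuous space at the price of exactly the regularity issues you flag (differentiation under the semigroup, $P_t h\to\ExP{}{h}$, Cauchy--Schwarz for $P_t$). Your part two is Herbst's argument, identical in substance to the book's. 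Your part three is sound and worth spelling out: applying the LSI to $\expb{\lambda f/2}$ gives $\mathrm{Ent}\br{\expb{\lambda f}}\le\br{\sigma^2\lambda^2/2}\ExP{}{\expb{\lambda f}\norm{\nabla f}^2}$, the duality bound $\ExP{}{XV}\le\mathrm{Ent}\br{X}+\ExP{}{X}\logb{\ExP{}{\expb{V}}}$ with $V=\br{\lambda\sigma^2/\eta}\norm{\nabla f}^2$ decouples the right side, and $\lambda\eta<2$ lets you absorb the entropy term; the subsequent Herbst integration genuinely needs the monotonicity of the scaled cumulant $u\mapsto u^{-1}\logb{\ExP{}{\expb{u\sigma^2\norm{\nabla f}^2/\eta}}}$ (a Jensen/H\"older fact), since otherwise the $1/u$ singularity at $u=0$ is not integrable, and you correctly invoke it. Carried out, that integration yields the prefactor $\logb{2/\br{2-\lambda\eta}}$, which is at most $\lambda\eta/\br{2-\lambda\eta}$ on $\br{0,2}$; since the gradient cumulant is nonnegative, your route proves a slightly sharper inequality from which the stated one follows a fortiori. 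The only caveat beyond those you name is a bootstrapping technicality in part three: the entropy manipulations presuppose $\ExP{}{\expb{\lambda f}}<\infty$, which is not assumed a priori and is usually secured by truncating $f$ and passing to the limit.
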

\begin{theorem}[Lipschitz Continuous Functions]\label{thm:LipSubOpt}
Assume the same setup as theorem \ref{thm:GenSubOpt}. Suppose further that $f$ is Lipschitz continuous with Lipschitz constant $L$, that is,
\[\forall \theta,\theta^\prime, |f(\theta)-f(\theta^\prime)|\leq L\norm{\theta-\theta^\prime} .\]
Then, $\Sen{f}{\risk}{\Sigma}{\theta}\leq \frac{1}{2}\risk L^2 \lama{\Sigma}$. Further,
\[\fpert{g}{\Sigma}{\theta}-\min_{\theta \in \C}\fpert{g}{\Sigma}{\theta}\leq \frac{\risk L^2 \lama{\Sigma}}{2}. \]
\end{theorem}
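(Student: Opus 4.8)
The plan is to bound the sensitivity function $\Sen{f}{\risk}{\Sigma}{\theta}$ \emph{uniformly} in $\theta$ by Gaussian concentration, and then feed that uniform bound into Theorem \ref{thm:GenSubOpt} to get the suboptimality guarantee. The starting observation is that, by definition, $\risk\,\Sen{f}{\risk}{\Sigma}{\theta}$ is exactly the log-moment-generating function, evaluated at $\risk$, of the centered random variable $\fpertr{f}{\noise}{\theta}=f\br{\theta+\noise}-\fpert{f}{\Sigma}{\theta}$ with $\noise\sim\Gauss{0}{\Sigma}$. So the task collapses to controlling the MGF of a Lipschitz function of a Gaussian, which is precisely what the Lipschitz bound in Theorem \ref{thm:LogSobolev} supplies.

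The single obstacle is that Theorem \ref{thm:LogSobolev} is stated for an isotropic Gaussian $\Gauss{0}{\sigma^2 I}$, whereas here $\Sigma$ is an arbitrary covariance. I would remove this gap by whitening: write $\noise=\Sigma^{1/2}\zeta$ with $\zeta\sim\Gauss{0}{I}$ (the symmetric PSD square root), and study the composite map $\zeta\mapsto f\br{\theta+\Sigma^{1/2}\zeta}$. Since $f$ is $L$-Lipschitz and $\norm{\Sigma^{1/2}\br{\zeta-\zeta'}}\leq\sqrt{\lama{\Sigma}}\,\norm{\zeta-\zeta'}$ (the operator norm of $\Sigma^{1/2}$ equals $\sqrt{\lama{\Sigma}}$), this composite is Lipschitz in $\zeta$ with constant $L\sqrt{\lama{\Sigma}}$. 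This is the step that injects the factor $\lama{\Sigma}$ into the final bound; tracking the Lipschitz constant through the change of variables is the crux of the argument.

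Applying the Lipschitz MGF bound of Theorem \ref{thm:LogSobolev} to this composite function with $\sigma^2=1$, and using that $\ExP{\zeta}{f\br{\theta+\Sigma^{1/2}\zeta}}=\fpert{f}{\Sigma}{\theta}$ so the centering matches, yields
\begin{align*}
\ExP{\noise\sim\Gauss{0}{\Sigma}}{\expb{\risk\fpertr{f}{\noise}{\theta}}}\leq\expb{\risk^2 L^2\lama{\Sigma}/2}.
\end{align*}
Taking $\frac{1}{\risk}\logb{\cdot}$ of both sides gives $\Sen{f}{\risk}{\Sigma}{\theta}\leq\frac{1}{2}\risk L^2\lama{\Sigma}$, which is the first claim. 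The key point for what follows is that this bound is uniform: it holds for every base point $\theta$, with no dependence on $\theta$ beyond the global Lipschitz constant $L$.

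Finally, for the suboptimality claim I would invoke Theorem \ref{thm:GenSubOpt}, which (taking $\opt{\theta}=\argmin_{\theta\in\C}\fpert{g}{\Sigma}{\theta}$, so that $\fpert{g}{\Sigma}{\opt{\theta}}=\min_{\theta\in\C}\fpert{g}{\Sigma}{\theta}$) gives $\fpert{g}{\Sigma}{\opta{\theta}}-\min_{\theta\in\C}\fpert{g}{\Sigma}{\theta}\leq\Sen{f}{\risk}{\Sigma}{\opt{\theta}}$. Because the sensitivity bound just derived is uniform in the base point, applying it at $\opt{\theta}$ bounds the right-hand side by $\frac{\risk L^2\lama{\Sigma}}{2}$, which is exactly the stated inequality. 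Everything outside the whitening reduction of the second paragraph is then routine substitution.
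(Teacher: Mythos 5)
Your proposal is correct and follows essentially the same route as the paper's own proof: whiten the noise via $\noise=\Sigma^{1/2}\zeta$, track the Lipschitz constant $L\sqrt{\lama{\Sigma}}$ through the change of variables, apply the Lipschitz MGF bound of Theorem \ref{thm:LogSobolev}, and feed the resulting uniform sensitivity bound into Theorem \ref{thm:GenSubOpt}. Your write-up is in fact slightly more explicit than the paper's (noting the matching of the centering term and the uniformity in $\theta$), but the argument is the same.
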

\begin{proof}
We can write \[\tilde{f}\br{\noise^\prime}=f\br{\theta+\Sigma^{1/2}\noise^\prime}-\fpert{f}{\Sigma}{\theta}\]
 where $\noise^\prime \sim \Gauss{0}{I_k}$.
 \begin{align*}
 &|\tilde{f}\br{\omega^\prime_1}-\tilde{f}\br{\omega^\prime_2}|=|f\br{\theta+\Sigma^{1/2}\noise^\prime_1}-f\br{\theta+\Sigma^{1/2}\noise^\prime_2}| \\
 &\leq L \norm{\Sigma^{1/2}\br{\omega^\prime_1-\omega^\prime_2}} \leq L \lama{\Sigma^{1/2}}\norm{\omega^\prime_1-\omega^\prime_2}
\end{align*}
Since $\lama{\Sigma^{1/2}}=\sqrt{\lama{\Sigma}}$, this shows that $\tilde{f}$ is Lipschitz with Lipschitz constant $\sqrt{\lama{\Sigma}}L$. The result now follows from theorem \ref{thm:LogSobolev}. From theorem \ref{thm:GenSubOpt}, this implies that
 \[\fpert{g}{\Sigma}{\theta}-\min_{\theta \in \C}\fpert{g}{\Sigma}{\theta}\leq \frac{\risk L^2 \lama{\Sigma}}{2}. \]

\end{proof}

\section{ALGORITHMS AND CONVERGENCE GUARANTEES}\label{sec:Algorithms}

In general, the expectations involved in \eqref{eq:SoftConvex} cannot be computed analytically. Thus, we need to resort to sampling based approaches in order to solve these problems. This has been studied extensively in recent years in the context of machine learning, where stochastic gradient methods and variants have been shown to be efficient, particularly in the context training machine learning algorithms with huge amounts of data. We now describe stochastic gradient methods for solving \eqref{eq:SoftConvex} and adapt the convergence guarantees of stochastic gradient methods to our setting.


\subsection{Stochastic Gradient Methods with Convergence Guarantees}\label{sec:GradUpdate}
In this section, we will derive gradients of the convex objective function  \eqref{eq:SoftConvex}.  We will assume that the function $f$ is differentiable at all $\theta \in \R^k$. In order to get unbiased gradient estimates, we exponentiate the objective \eqref{eq:SoftConvex} to get:
\[\Gb\br{\theta}=\ExP{\noise \sim \Gauss{0}{\Sigma}}{\expb{\risk f\br{\theta+\noise}+\frac{\risk}{2}\tran{\theta}R\theta}}.\]
Since $f\br{\theta}$ is differentiable for all $\theta$, so is $\expb{\risk f\br{\theta+\noise}+\frac{1}{2}\tran{\theta}\br{\risk R}\theta}$. Further, suppose that
\[\ExP{}{\expb{2\risk f\br{\theta+\noise}}\norm{ \nabla f\br{\theta+\noise} + R \theta}^2}\]
exists and is finite for each $\theta$. Then, if we differentiate $\Gb\br{\theta}$ with respect to $\theta$, we can interchange the expectation and differentiation to get
\begin{align*}
\ExP{}{\expb{\risk f\br{\theta+\noise}+\frac{\risk}{2}\tran{\theta}R\theta}\br{\risk \nabla f\br{\theta+\noise}+\risk R \theta}}
\end{align*}
Thus, we can sample $\noise \sim \Gauss{0}{\Sigma}$ and get an unbiased estimate of the gradient
\begin{align}
\risk \expb{\risk f\br{\theta+\noise}+\frac{\risk}{2}\tran{\theta}R\theta}\br{\nabla f\br{\theta+\noise}+ R \theta}\label{eq:UnbiasedGrad1}
\end{align}
which we denote by $\hat{\nabla}\Gb\br{\theta,\noise}$. As in standard stochastic gradient methods, one saves on the complexity of a single iteration by using a single (or a small number of) samples to get a gradient estimate while still converging to the global optimum with high probability and in expectation, because over multiple iterations one moves along the negative gradient ``on average''.
\begin{algorithm}[Stochastic Gradient Method]
 \begin{algorithmic}
 \State $\theta \gets 0$
\For {$i = 1,\ldots,T$}
    \State $\noise \sim \Gauss{0}{\Sigma},\theta \gets  \Proj{\C}{\theta-\eta_i\hat{\nabla}\Gb\br{\theta,\noise}}$
\EndFor
\end{algorithmic}
 \caption{Stochastic Gradient Method for \eqref{eq:SoftConvex}}\label{alg:StochGrad}
\end{algorithm}

From standard convergence theory for stochastic gradient methods \citep{BubeckNotes}, we have:
\begin{corollary}
Suppose that $\ExP{}{\norm{\hat{\nabla}\Gb\br{\theta,\noise}}^2}\leq \zeta^2$, \\$\C$ is contained in a ball of radius $\Rad{\C}$ and \\$ \risk R \succeq \inve{\Sigma}$. Run algorithm \ref{alg:StochGrad} for $T$ iterations with \\$\eta_i=\frac{\Rad{\C}}{\zeta}\sqrt{\frac{1}{2i}}$ and define $\hat{\theta}=\frac{1}{T}\sum_{i=1}^T \theta_i$, $\opt{\Gb}=\argmin_{\theta\in \C}\Gb\br{\theta}$. Then, we have
\[\ExP{}{\Gb\br{\hat{\theta}}}-\opt{\Gb}\leq \Rad{\C}\zeta \sqrt{\frac{1}{2T}}\]
\end{corollary}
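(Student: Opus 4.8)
The plan is to treat this as a direct instance of the standard convergence theorem for projected stochastic gradient descent on a convex, bounded domain, and to spend the effort verifying that the hypotheses of that theorem are met for the exponentiated objective $\Gb$. Three facts are required: (i) $\Gb$ is convex on $\C$; (ii) the update direction $\hat{\nabla}\Gb\br{\theta,\noise}$ is an unbiased estimate of the true gradient $\nabla\Gb\br{\theta}$; and (iii) its second moment is bounded, namely $\ExP{}{\norm{\hat{\nabla}\Gb\br{\theta,\noise}}^2}\leq\zeta^2$. Fact (iii) is assumed outright in the statement, and fact (ii) was already established in the paragraph preceding the algorithm by justifying the interchange of expectation and differentiation. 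Hence the only genuine verification is fact (i).

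For the convexity of $\Gb$ I would observe that $\Gb\br{\theta}=\expb{\risk\br{\logpaw{\Sigma}{\theta}{\risk}{f}+\frac{1}{2}\tran{\theta}R\theta}}$; that is, $\Gb$ is the exponential (scaled by $\risk>0$) of the objective appearing in \eqref{eq:SoftConvex}. By Theorem~\ref{thm:ConvexityMain}, that inner objective is convex precisely under the hypothesis $\risk R\succeq\inve{\Sigma}$, and since $t\mapsto\expb{\risk t}$ is convex and increasing, $\Gb$ is convex as a composition. Because the exponential is strictly monotone, $\argmin_{\theta\in\C}\Gb$ coincides with the minimizer of \eqref{eq:SoftConvex}, so optimizing $\Gb$ is legitimate. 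Let $\opt{\theta}\in\C$ denote this minimizer.

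With these facts in hand I would either cite \citep{BubeckNotes} directly or reproduce the short argument. Writing $g_i=\hat{\nabla}\Gb\br{\theta_i,\noise_i}$ and using the non-expansiveness of $\Proj{\C}{\cdot}$ together with the expansion of $\norm{\theta_{i+1}-\opt{\theta}}^2$, then taking conditional expectations and invoking unbiasedness and the first-order convexity inequality $\inner{\nabla\Gb\br{\theta_i}}{\theta_i-\opt{\theta}}\geq\Gb\br{\theta_i}-\opt{\Gb}$, one obtains the per-step recursion
\[2\eta_i\br{\ExP{}{\Gb\br{\theta_i}}-\opt{\Gb}}\leq \ExP{}{\norm{\theta_i-\opt{\theta}}^2}-\ExP{}{\norm{\theta_{i+1}-\opt{\theta}}^2}+\eta_i^2\zeta^2.\]
Summing over $i=1,\dots,T$, telescoping the distance terms (bounded via $\Rad{\C}$), applying Jensen's inequality to pass from $\frac1T\sum_i\ExP{}{\Gb\br{\theta_i}}$ to $\ExP{}{\Gb\br{\hat\theta}}$, and substituting the schedule $\eta_i=\frac{\Rad{\C}}{\zeta}\sqrt{\frac{1}{2i}}$ yields the claimed rate.

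The main obstacle is not analytical but a matter of careful bookkeeping. First, I would confirm that the second-moment bound $\zeta^2$ is genuinely finite and the expectation/gradient interchange valid on all of $\C$: this rests on the growth assumption \eqref{eq:BoundedGrowth} together with the finiteness of $\ExP{}{\expb{2\risk f\br{\theta+\noise}}\norm{\nabla f\br{\theta+\noise}+R\theta}^2}$ posited in the setup. Second, and more delicately, I expect to have to track the constant through the decreasing step size $\eta_i\propto 1/\sqrt{i}$: the telescoped residual $\sum_i\eta_i$ and the leading distance term must be combined (via Abel summation and $\sum_i i^{-1/2}\le 2\sqrt{T}$) so that they reproduce exactly the stated $\Rad{\C}\zeta\sqrt{1/(2T)}$ factor rather than a looser constant, since a naive bound overshoots it.
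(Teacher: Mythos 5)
Your proposal takes essentially the same route as the paper: the paper gives no proof of this corollary at all, merely prefacing it with ``From standard convergence theory for stochastic gradient methods \citep{BubeckNotes}'', and your plan is precisely to instantiate that standard theory --- checking convexity of $\Gb$, unbiasedness of $\hat{\nabla}\Gb\br{\theta,\noise}$, and the second-moment bound, then running the usual projected-SGD recursion with Jensen's inequality at the end. Your convexity argument is the right one: $\Gb\br{\theta}=\expb{\risk\br{\logpaw{\Sigma}{\theta}{\risk}{f}+\frac{1}{2}\tran{\theta}R\theta}}$ is the increasing convex function $t\mapsto\expb{\risk t}$ composed with the objective of \eqref{eq:SoftConvex}, which is convex by Theorem~\ref{thm:ConvexityMain} under $\risk R\succeq\inve{\Sigma}$, and strict monotonicity of the exponential makes the minimizers coincide.

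One concrete warning about your final paragraph: you are right that the naive bound overshoots, but no amount of Abel-summation bookkeeping will reproduce the stated constant, because it is not attainable by this analysis. With $\eta_i=\frac{\Rad{\C}}{\zeta}\sqrt{1/(2i)}$, the noise term $\frac{\zeta^2}{2}\sum_{i=1}^{T}\eta_i=\frac{\Rad{\C}\zeta}{2\sqrt{2}}\sum_{i=1}^{T}i^{-1/2}$, once divided by $T$, is already asymptotically equal to the entire claimed bound $\Rad{\C}\zeta\sqrt{1/(2T)}$ (the estimate $\sum_{i=1}^{T}i^{-1/2}\leq 2\sqrt{T}$ is tight, since $\sum_{i=1}^{T}i^{-1/2}\geq 2\sqrt{T+1}-2$); the telescoped distance term then contributes at least roughly as much again, even granting the optimistic bound $\ExP{}{\norm{\theta_i-\opt{\theta}}^2}\leq\Rad{\C}^2$ for all $i$ (and the honest bound for $i\geq 2$ is $4\Rad{\C}^2$, since two points of $\C$ can be $2\Rad{\C}$ apart; only the initialization $\theta_1=0$ is guaranteed to be within $\Rad{\C}$ of $\opt{\theta}$). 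So the standard analysis with this decreasing step size yields roughly $2\Rad{\C}\zeta\sqrt{1/(2T)}$ at best, while the cited constant-step result of \citep{BubeckNotes} yields $\Rad{\C}\zeta/\sqrt{T}$. The factor you cannot close is an imprecision in the corollary as stated --- which the paper asserts by citation and never proves --- not a defect of your method; you should present the bound with the constant the analysis actually delivers rather than chase the stated one.
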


In the following theorem, we prove a convergence rate for algorithm \ref{alg:StochGrad}.
\begin{theorem}\label{thm:AlgConv}
Suppose that $R=\kappa I, \Sigma =\sigma^2 I,\risk \kappa \succeq \frac{1}{\sigma^2}$, $\C$ is contained in a sphere of radius $\Rad{\C}$ and that for all $\theta \in \C$ $\fpert{g}{\Sigma}{\theta} \leq \mup$. Also, suppose that for some $\beta<\powb{\risk}{-1}$:
\[\frac{1}{2\risk}\logb{\ExP{\noise\sim\Gauss{0}{\Sigma}}{\expb{2\frac{\risk\sigma^2}{\beta}\norm{\nabla f\br{\theta+\noise}}^2}}}\leq \gamma^2\]
Define $\delta=\sqrt{\frac{\beta \gamma^2 }{\sigma^2\br{1-\risk\beta}}}+\kappa\Rad{\C}$. Then, we can choose
$\zeta \leq \risk^2\delta^2\expb{2\risk(\mup+\gamma^2)+\frac{\risk\beta}{1-\risk\beta}-\sigma^2\kappa}$.
\end{theorem}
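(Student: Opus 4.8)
The plan is to expand the second moment of the gradient estimate \eqref{eq:UnbiasedGrad1}, peel off a deterministic factor, and then bound the remaining expectation by decoupling the exponential weight from the squared gradient norm. Specializing $R=\kappa I$ and $\Sigma=\sigma^2 I$, the estimator is $\hat{\nabla}\Gb\br{\theta,\noise}=\risk\expb{\risk f\br{\theta+\noise}+\frac{\risk\kappa}{2}\norm{\theta}^2}\br{\nabla f\br{\theta+\noise}+\kappa\theta}$, so that $\ExP{}{\norm{\hat{\nabla}\Gb\br{\theta,\noise}}^2}=\risk^2\expb{\risk\kappa\norm{\theta}^2}\ExP{\noise}{\expb{2\risk f\br{\theta+\noise}}\norm{\nabla f\br{\theta+\noise}+\kappa\theta}^2}$, where the quadratic factor is deterministic and pulls out of the expectation. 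I would then bound the displacement by the triangle inequality, $\norm{\nabla f\br{\theta+\noise}+\kappa\theta}\le\norm{\nabla f\br{\theta+\noise}}+\kappa\Rad{\C}$, using $\norm{\theta}\le\Rad{\C}$; this is where the $\kappa\Rad{\C}$ summand of $\delta$ originates.

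The second step is the cancellation that removes all $\theta$-dependence. Writing $f\br{\theta+\noise}=\fpertr{f}{\noise}{\theta}+\fpert{f}{\Sigma}{\theta}$ centers the exponent around the smoothed value, and the identity $\fpert{g}{\Sigma}{\theta}=\fpert{f}{\Sigma}{\theta}+\frac{\kappa}{2}\norm{\theta}^2+\frac12\tr{\Sigma R}$ (established inside the proof of Theorem \ref{thm:GenSubOpt}) together with the hypothesis $\fpert{g}{\Sigma}{\theta}\le\mup$ gives $2\risk\fpert{f}{\Sigma}{\theta}\le 2\risk\mup-\risk\kappa\norm{\theta}^2-\risk\tr{\Sigma R}$. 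Multiplying by the deterministic $\expb{\risk\kappa\norm{\theta}^2}$, the $\risk\kappa\norm{\theta}^2$ terms cancel exactly, leaving a $\theta$-free prefactor of the form $\expb{2\risk\mup-\risk\tr{\Sigma R}}$, the $-\sigma^2\kappa$ in the statement tracking the $-\tr{\Sigma R}$ trace contribution.

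The third and hardest step is to bound the residual expectation $\ExP{\noise}{\expb{2\risk\fpertr{f}{\noise}{\theta}}\br{\norm{\nabla f\br{\theta+\noise}}+\kappa\Rad{\C}}^2}$, in which the exponential weight and the gradient norm are correlated through the arbitrary function $f$. I would decouple them with H\"older's inequality and control the moment generating function of the centered perturbation $\fpertr{f}{\noise}{\theta}$ using the last part of Theorem \ref{thm:LogSobolev} with the choice $\lambda=2\risk$ and $\eta=\beta$: the admissibility condition $\lambda\eta<2$ becomes exactly $\beta<\risk^{-1}$, while $\frac{\lambda\sigma^2}{\eta}=\frac{2\risk\sigma^2}{\beta}$ matches the exponential-moment hypothesis, so the transfer inequality converts the gradient bound $\gamma^2$ into the exponent contributions $2\risk\gamma^2$ and $\frac{\risk\beta}{1-\risk\beta}$. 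The same hypothesis, read under the law tilted by $\expb{2\risk\fpertr{f}{\noise}{\theta}}$, controls the tilted second moment of $\norm{\nabla f}$ and yields the first summand $\sqrt{\beta\gamma^2/\br{\sigma^2\br{1-\risk\beta}}}$ of $\delta$, the inflation factor $\br{1-\risk\beta}^{-1}$ being precisely the cost of the tilting.

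\textbf{Main obstacle.} The crux is step three: cleanly separating the two correlated factors while assuming nothing about $f$ beyond differentiability and the single exponential-moment bound on $\norm{\nabla f}^2$. The log-Sobolev transfer inequality is the indispensable tool here, and the delicate part is choosing the H\"older exponents and the pair $\br{\lambda,\eta}=\br{2\risk,\beta}$ consistently so that the constants assemble into exactly $\risk^2\delta^2\expb{2\risk\br{\mup+\gamma^2}+\frac{\risk\beta}{1-\risk\beta}-\sigma^2\kappa}$. Collecting the $\risk^2$ from the estimator, the $\delta^2$ from the gradient norm, and the three exponent pieces from steps two and three is the bookkeeping that must be carried out with care.
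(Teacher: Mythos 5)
Your first two steps reproduce the paper's proof exactly: pull out the deterministic factor $\risk^2\expb{\risk\kappa\norm{\theta}^2}$, recenter the exponent as $2\risk f\br{\theta+\noise}=2\risk\fpertr{f}{\noise}{\theta}+2\risk\fpert{f}{\Sigma}{\theta}$, and use $\fpert{g}{\Sigma}{\theta}=\fpert{f}{\Sigma}{\theta}+\frac{\kappa}{2}\norm{\theta}^2+\frac{1}{2}\tr{\Sigma R}\leq\mup$ to cancel the $\theta$-dependence and obtain the prefactor $\expb{2\risk\mup-\sigma^2\kappa}$; your choice $(\lambda,\eta)=(2\risk,\beta)$ in the last part of Theorem \ref{thm:LogSobolev} for the moment-generating factor is also the paper's. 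The gap is in your step three, exactly at the point you call the crux. The paper does not decouple the exponential weight from the gradient norm with H\"older at all; it uses the exact change-of-measure identity
\begin{align*}
&\ExP{}{\expb{2\risk\fpertr{f}{\noise}{\theta}}\norm{\nabla f\br{\theta+\noise}+\kappa\theta}^2}\\
&\quad=\ExP{\noise\sim Q}{\norm{\nabla f\br{\theta+\noise}+\kappa\theta}^2}\,\ExP{}{\expb{2\risk\fpertr{f}{\noise}{\theta}}},
\end{align*}
with $Q\br{\noise}\propto P\br{\noise}\expb{2\risk\fpertr{f}{\noise}{\theta}}$, and the entire difficulty is then to bound the tilted second moment $\ExP{\noise\sim Q}{\norm{\nabla f\br{\theta+\noise}}^2}$. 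Your proposal asserts this bound ("the same hypothesis, read under the law tilted by $\expb{2\risk\fpertr{f}{\noise}{\theta}}$, controls the tilted second moment") but gives no mechanism, and the assertion is circular: the hypothesis is an exponential-moment bound under the Gaussian $P$, and transferring it to $Q$ is precisely what must be proved. That transfer is the content of Lemma \ref{lem:Bound}.

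What the paper actually does there is sandwich $\KL{Q}{P}$ between two bounds and solve the resulting self-referential inequality: an upper bound from the \emph{entropy form} of log-Sobolev (the first part of Theorem \ref{thm:LogSobolev}, applied to $\expb{\risk f\br{\theta+\noise}}$, not the transfer inequality), and a lower bound from the Donsker--Varadhan variational formula (Corollary \ref{thm:KLvar}) with test function $\frac{2\risk\sigma^2}{\beta}\norm{\nabla f\br{\theta+\noise}}^2$:
\begin{align*}
&2\risk^2\sigma^2\,\ExP{\noise\sim Q}{\norm{\nabla f\br{\theta+\noise}}^2}\;\geq\;\KL{Q}{P}\;\geq\;\ExP{\noise\sim Q}{\tfrac{2\risk\sigma^2}{\beta}\norm{\nabla f\br{\theta+\noise}}^2}\\
&\qquad\qquad-\logb{\ExP{}{\expb{\tfrac{2\risk\sigma^2}{\beta}\norm{\nabla f\br{\theta+\noise}}^2}}}.
\end{align*}
The last term is at most $2\risk\gamma^2$ by hypothesis, and rearranging gives $\ExP{\noise\sim Q}{\sigma^2\norm{\nabla f\br{\theta+\noise}}^2}\leq\frac{\beta\gamma^2}{1-\risk\beta}$ --- your inflation factor $(1-\risk\beta)^{-1}$ comes from this rearrangement, not from any "cost of tilting" read off the hypothesis. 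H\"older cannot substitute here: with exponents $(p,q)$ it needs the moment-generating function at the inflated parameter $2\risk p$ (which the transfer inequality only tolerates if $p^2\risk\beta<1$) together with $2q$-th polynomial moments of $\norm{\nabla f}$, and the constants you can extract degrade as $p\to1$ or $p\to(\risk\beta)^{-1/2}$; after optimizing over $p$ they do not assemble into the stated $\delta^2$ and exponent, so your claim that the bookkeeping closes "exactly" would fail. Moreover H\"older is lossy while the change-of-measure factorization is an identity, so mixing the two, as your outline does, is redundant. In short: you have the right skeleton and the right constants memorized, but the indispensable tools for the hard step are the KL variational formula and the entropy form of log-Sobolev, neither of which appears in your plan.
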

\begin{remark}
The convergence guarantees are in terms of the exponentiated objective $\Gb\br{\theta}$. We can convert these into bounds on $\logb{\Gb\br{\theta}}$ as follows:
\begin{align*}
& \ExP{}{\logb{\Gb\br{\hat{\theta}}}} \leq \logb{\ExP{}{\Gb\br{\hat{\theta}}}}  \leq \logb{\opt{\Gb}+\frac{\zeta\Rad{\C}}{\sqrt{2T}}}
\end{align*}
where the first inequality follows from concavity of the $\log$ function. Subtracting $\logb{\opt{\Gb}}$, we get
\begin{align*}
&\ExP{}{\logb{\Gb\br{\hat{\theta}}}}-\logb{\opt{\Gb}} \leq \logb{1+\frac{\zeta\Rad{\C}}{\sqrt{2T}\opt{\Gb}}}.
\end{align*}
\end{remark}
\section{CONTROL PROBLEMS}\label{sec:Control}
In this section, we extend the above approach to the control of discrete-time dynamical systems. Stochastic optimal control of nonlinear systems in general is a hard problem and the only known general approach is based on dynamic programming, which scales exponentially with the dimension of the state space. Algorithms that approximate the solution of the dynamic program directly (approximate dynamic programming) have been successful in various domains, but scaling these approaches to high dimensional continuous state control problems has been challenging. In this section, we pursue the alternate approach of policy search or policy gradient methods \citep{Baxter2001}. These algorithms have the advantage that they are directly optimizing the performance of a control policy as opposed to a surrogate measure like the error in the solution to the Bellman equation. They have been used successfully for various applications and are closely related to path integral control \citep{Kappen05,theodorou2010generalized}. However, in all of these approaches, there were no guarantees made regarding the optimality of the policy that the algorithm converges to (even in the limit of infinite sampling) or the rate of convergence.

In this work, we develop the \emph{first} policy gradient algorithms that achieve the \emph{globally} optimal solutions to a class of \emph{risk-averse} policy optimization problems. 

\subsection{Problem Setup}

We deal with arbitrary discrete-time dynamical systems of the form
\begin{align}
& \noisetraj = \tran{\begin{pmatrix} \noisew_1 & \ldots &\noisew_{\Nh-1}\end{pmatrix}} \sim \Pn \nonumber \\
& \s_{1} = 0,\quad\s_{t+1}=\Detff{\s_t}{\ygu_t}{\noisew_t}{t} \quad t=1,\ldots,\Nh-1 \label{eq:Dyn}\\
& \ygu_t = \ug_t + \noiseu_t,\quad\noiseu_t \sim \Gauss{0}{\Unoise_t}  \quad t=1,\ldots,\Nh-1\label{eq:UNoise}
\end{align}
where $\s_t \in \R^{\ns}$ denotes the state, $\ygu_t \in \R^{\nc}$ the effective control input, $\noisew_t \in \R^{\nn}$ external disturbances, $\ug_t \in \R^{\nc}$ the actual control input, $\noiseu_t \in \R^{\nc}$ the control noise, $\dynf: \R^{\ns} \times \R^{\nc} \times \R^{\nn} \times \{1,\ldots,\Nh-1\}\mapsto \R^{\ns}$ the discrete-time dynamics. In this section, we will use boldface to denote quantities stacked over time (like $\noisetraj$).
Equation \eqref{eq:Dyn} can model any noisy discrete-time dynamical system, since $\dynf$ can be any function of the current state, control input and external disturbance (noise). However, we require that all the control dimensions are affected by Gaussian noise as in \eqref{eq:UNoise}. This can be thought of either as real actuator noise or artificial exploration noise. The choice of zero initial state $\s_1=0$ is arbitrary - our results even extend to an arbitrary distribution over the initial state.

We will work with costs that are a combination of arbitrary state costs and quadratic control costs:
\begin{align}
\sum_{t=1}^{\Nh} \cost(\s_t) + \sum_{t=0}^{\Nh-1} \frac{\tran{\ug_t}\costu_t\ug_t}{2} \label{eq:Obj}
\end{align}
where $\cost\br{\s_t}$ is the stage-wise state cost at time $t$. $\cost$ can be any bounded function of the state-vector $\s_t$. Further, we will assume that the control-noise is non-degenerate, that is $\Unoise_t$ is full rank for all $0 \leq t \leq \Nh-1$. We denote $\Unoisei_t = \inve{\Unoise_{t}}$. We seek to design feedback policies
\begin{align}
u_t &=K_t\feat{\s_t}{t}, \feats: \R^{\ns} \times \{1,2,\ldots,\Nh-1\} \mapsto \R^{\nf} \nonumber \\
 & \quad K_t \in \R^{\nc \times \nf} \label{eq:Pol}
\end{align}
to minimize the accumulated cost \eqref{eq:Obj}. We will assume that the features $\feats$ are fixed and we seek to optimize the policy parameters $\Kb=\{K_t:t=1,2,\ldots,\Nh-1\}$.
The stochastic optimal control problem we consider is defined as follows:
\begin{align}
 \mini_{\Kb} & \ExP{\noisetraj,\noiseu_t}{\expb{\risk\costtraj\br{\Kb}}} \nonumber\\
 \text{Subject to } &  \s_{1} = 0,\s_{t+1}=\Detff{\s_t}{\ygu_t}{\noisew_t}{t} \nonumber\\
& \ygu_t = \ug_t + \noiseu_t,\ug_t=\K_t\feat{\s_t}{t} \nonumber \\
& \noisetraj \sim \Pn,\noiseu_t \sim \Gauss{0}{\Unoise_t} \nonumber \\
& \costtraj\br{\Kb}=\sum_{t=0}^{\Nh} \cost\br{\s_t} + \sum_{t=0}^{\Nh-1} \frac{\tran{\ug_t}\costu_t\ug_t}{2}\label{eq:Opt}
\end{align}
This is exactly the same as the formulation in Risk Sensitive Markov Decision Processes \citep{marcus1997risk}, the only change being that we have explicitly separated the noise appearing in the controls from the noise in the dynamical system overall. In this formulation, the objective depends not only on the average behavior of the control policy but also on variance and higher moments (the tails of the distribution of costs). This has been studied for linear systems under the name of LEQG control \citep{speyer1974optimization}. $\risk$ is called the risk factor: Large positive values of $\risk$ result in  strongly risk-averse policies while large negative values result in risk-seeking policies. In our formulation, we will need a certain minimum degree of risk-aversion for the resulting policy optimization problem to be convex.

\subsection{Convex Controller Synthesis} \label{sec:MainControl}

\begin{thm} \label{thm:ConvObj}
If $\risk \costu_t \succeq \inv{\Unoise_t}=\Unoisei_t$ for $t=1,\ldots,\Nh-1$, then the optimization problem \eqref{eq:Opt} is convex.
\end{thm}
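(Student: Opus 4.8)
The plan is to show that the objective of \eqref{eq:Opt} is a convex function of $\Kb$ by a change of variables that reduces it, time step by time step, to the completing-the-square computation underlying Theorem \ref{thm:ConvexityMain}. First I would condition on a fixed realization of the external disturbance $\noisetraj$ and write the objective as $\ExP{\noisetraj}{J\br{\Kb;\noisetraj}}$, where $J\br{\Kb;\noisetraj}=\ExP{\noiseu}{\expb{\risk\costtraj\br{\Kb}}}$ is the expectation over the control noises alone. Since an average of convex functions is convex, it suffices to prove that $J\br{\cdot;\noisetraj}$ is convex in $\Kb$ for each fixed $\noisetraj$; averaging over $\noisetraj$ then yields convexity of the full objective, and since the feasible set is the convex space of gain matrices, \eqref{eq:Opt} is a convex program.

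The crucial step is to change integration variables from the control noises $\noiseu_t$ to the effective controls $\ygu_t=\ug_t+\noiseu_t$, with $\ug_t=K_t\feat{\s_t}{t}$. For fixed $\Kb$ and $\noisetraj$ this map is causal and triangular in time: $\ygu_t$ depends on $\noiseu_1,\dots,\noiseu_t$ but not on later noises, and $\Diff{\ygu_t}{\noiseu_t}=I$. Its Jacobian is therefore block lower-triangular with identity diagonal blocks, so its determinant is $1$ and the map is a bijection (one recovers $\noiseu_t=\ygu_t-K_t\feat{\s_t}{t}$ recursively). The payoff is decisive: written in the $\ygu$ variables, the state trajectory is generated by $\s_{t+1}=\Detff{\s_t}{\ygu_t}{\noisew_t}{t}$ from $\s_1=0$, so each state $\s_t$, each feature $\feat{\s_t}{t}$, and each state cost $\cost\br{\s_t}$ becomes a fixed function of $\br{\ygu_1,\dots,\ygu_{t-1}}$ with no dependence on $\Kb$. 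The gains now enter solely through $\ug_t=K_t\feat{\s_t}{t}$, which is linear in $\Kb$ for each fixed $\ygu$.

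With the Jacobian equal to $1$, I would write, up to a positive constant independent of $\Kb$,
\[J\br{\Kb;\noisetraj}\propto\int\expb{E\br{\ygu,\Kb}}\prod_t\diff{\ygu_t},\]
where $E$ gathers the Gaussian control-noise exponents $-\tfrac{1}{2}\tranb{\ygu_t-\ug_t}\Unoisei_t\br{\ygu_t-\ug_t}$ and the scaled cost $\risk\costtraj$. Collecting the $\Kb$-dependent terms one time step at a time reproduces exactly the completing-the-square computation from the proof of Theorem \ref{thm:ConvexityMain}:
\[
\begin{aligned}
&-\tfrac{1}{2}\tranb{\ygu_t-\ug_t}\Unoisei_t\br{\ygu_t-\ug_t}+\tfrac{\risk}{2}\tran{\ug_t}\costu_t\ug_t \\
&\qquad=-\tfrac{1}{2}\tran{\ygu_t}\Unoisei_t\ygu_t+\tran{\ug_t}\Unoisei_t\ygu_t+\tfrac{1}{2}\tran{\ug_t}\br{\risk\costu_t-\Unoisei_t}\ug_t.
\end{aligned}
\]
The first term is constant in $\Kb$, the second is linear in $\ug_t$ and hence in $K_t$, and the quadratic form is convex precisely because the hypothesis $\risk\costu_t\succeq\Unoisei_t$ gives $\risk\costu_t-\Unoisei_t\succeq0$. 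Adding the remaining $\Kb$-free piece $\risk\sum_t\cost\br{\s_t}$ and summing over $t$ shows that $E\br{\ygu,\cdot}$ is convex in $\Kb$ (constant plus linear plus convex quadratic) for every $\ygu$. Since $\expb{\cdot}$ is convex and increasing, $\expb{E\br{\ygu,\cdot}}$ is convex in $\Kb$ pointwise in $\ygu$, and integrating against the nonnegative measure $\prod_t\diff{\ygu_t}$ preserves convexity; hence $J\br{\cdot;\noisetraj}$ is convex and the claim follows.

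The main obstacle is exactly the decoupling exploited in the second paragraph. In the original $\noiseu$ variables the state $\s_t$, and therefore $\feat{\s_t}{t}$, depends on $K_1,\dots,K_{t-1}$ through the arbitrary and possibly highly nonconvex maps $\dynf$ and $\feats$, so joint convexity of the objective in $\Kb$ is not visible at all. Changing to the effective controls is what pushes all of this dependence out of the dynamics and into the single affine term $\ug_t=K_t\feat{\s_t}{t}$, after which the per-step condition $\risk\costu_t\succeq\Unoisei_t$ is all that is required. The remaining care is bookkeeping: verifying that the substitution is a genuine bijection at each fixed $\Kb$, that its determinant-$1$ Jacobian leaves a $\Kb$-independent constant in front, and that the integrals are finite (as is implicit in the formulation of \eqref{eq:Opt} and the assumed boundedness of the state costs $\cost$), so that no spurious $\Kb$-dependent weight can corrupt convexity.
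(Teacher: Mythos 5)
Your proposal is correct and follows essentially the same route as the paper's proof: condition on $\noisetraj$, integrate over the effective controls $\ygu_t$ so that the states $\s_t$ and features $\feat{\s_t}{t}$ become $\Kb$-independent, complete the square to exhibit the exponent as constant plus linear plus a quadratic that is convex precisely when $\risk\costu_t-\Unoisei_t\succeq 0$, and finish by composition with $\expb{\cdot}$ and integration. The only difference is presentational: you justify the passage to $\ygu$-coordinates via an explicit triangular, unit-Jacobian change of variables, whereas the paper writes the joint Gaussian density of $\ytraj$ directly and states the quadratic term in trace form $\tr{\tran{K_t}\br{\risk\costu_t-S_t}K_t\feat{\s_t}{t}\tran{\feat{\s_t}{t}}}$.
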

\begin{proof}
We first show that for a fixed $\noisetraj$, the quantity $\ExP{\noiseu_t \sim \Gauss{0}{\Unoise_t}}{\expb{\risk\costtraj\br{\Kb}}}$
is a convex function of $\Kb$. Then, by the linearity of expectation, so is the original objective. We can write down the above expectation (omitting the normalizing constant of the Gaussian) as:
\begin{align*}
\int \expb{-\displaystyle\sum_{t=1}^{\Nh-1} \frac{1}{2}\norm{\ygu_t-K_t\feat{\s_t}{t}}_{{\Unoisei_t}}^{2}+\risk\costtraj\br{\Kb}}d\ytraj
\end{align*}
If we fix $\ytraj,\noisetraj$, using \eqref{eq:Dyn}, we can construct $\s_t$ for every $t=1,\ldots,\Nh$. Thus, $\traj$ is a deterministic function of $\ytraj,\noisetraj$ and does not depend on $\Kb$. The term inside the exponential can be written as
\begin{align*}
&-\frac{1}{2}\br{\sum_{t=1}^{\Nh-1} \norm{\ygu_t}^2_{S_t}}+\risk\br{\sum_{t=1}^{\Nh}\cost(\s_t)}\\
&\quad +\sum_{t=1}^{\Nh-1} \frac{1}{2}\tr{\br{\tran{K_t}\br{\alpha\costu_t-S_t}K_t}\feat{\s_t}{t}\tran{\feat{\s_t}{t}}}\\
& \quad -\sum_{t=1}^{\Nh-1} \tran{\ygu_t}S_t\feat{\s_t}{t}K_t
\end{align*}
The terms on the first line don't depend on $\Kb$. The function $\br{\tran{K_t}\br{\alpha\costu_t-S_t}K_t}$ is convex in $\Kb$ with respect to the semidefinite cone \citep{Boyd04} when $\alpha\costu_t-S_t \succeq 0$ and $\feat{\s_t}{t}\tran{\feat{\s_t}{t}}$ is a positive semidefinite matrix. Hence the term on the second line is convex in $\Kb$. The term on the third line is linear in $\Kb$ and hence convex. Since $\exp$ is a convex and increasing function, the composed function (which is the integrand) is convex as well in $\Kb$. Thus, the integral is convex in $\Kb$.
\end{proof}

We can add arbitrary convex constraints and penalties on $\Kb$ without affecting convexity.

\begin{corollary}
The problem
\begin{align}
& \min_{\Kb}  & \ExP{\noisetraj \sim \Pn,\noiseu_t \sim \Gauss{0}{\Unoise_t}}{\expb{\risk\costtraj\br{\Kb}}} \nonumber\\
& \text{Subject to} &\eqref{eq:Dyn},\eqref{eq:UNoise},\Kb \in \C \label{eq:OptC}
\end{align}
is a convex optimization problem for any arbitrary convex set $\C \subset \R^{\nc \times \nf \times (\Nh-1)}$ if $\risk \costu_t \succeq \Sigma_t \quad \forall t$.
\end{corollary}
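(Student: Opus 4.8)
The plan is to reduce this corollary directly to Theorem~\ref{thm:ConvObj}, which has already done essentially all the work. The objective in \eqref{eq:OptC} is exactly the objective of \eqref{eq:Opt}, namely $\ExP{\noisetraj,\noiseu_t}{\expb{\risk\costtraj\br{\Kb}}}$, and the dynamics \eqref{eq:Dyn} together with the control-noise model \eqref{eq:UNoise} are not constraints on the decision variable $\Kb$ at all: they merely specify how the rollout cost $\costtraj\br{\Kb}$ is generated as a (stochastic) function of the policy parameters. Once those equations are used to express $\costtraj$ as a function of $\Kb$ through the recursion for $\s_t$ and the law $\ug_t=\K_t\feat{\s_t}{t}$, the only genuine restriction on the feasible region is $\Kb \in \C$.

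First I would invoke Theorem~\ref{thm:ConvObj}: under the hypothesis $\risk \costu_t \succeq \Unoisei_t$ for every $t=1,\ldots,\Nh-1$, the map $\Kb \mapsto \ExP{\noisetraj,\noiseu_t}{\expb{\risk\costtraj\br{\Kb}}}$ is convex on all of $\R^{\nc \times \nf \times (\Nh-1)}$. Next I would observe that intersecting the domain of a convex function with an arbitrary convex set $\C$ leaves a convex feasible region, so that \eqref{eq:OptC} is the minimization of a convex function over a convex set, which is by definition a convex optimization problem. Optionally, I would note that the same argument accommodates any additional convex penalty on $\Kb$ added to the objective, since the sum of two convex functions is again convex; this is what is meant by the earlier remark that ``we can add arbitrary convex constraints and penalties on $\Kb$ without affecting convexity.''

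There is really no hard step here: the entire content lives in Theorem~\ref{thm:ConvObj}, and the corollary only records the standard fact that adjoining convex constraints (equivalently, convex indicator penalties) to a convex objective preserves convexity of the program. The only points requiring care are bookkeeping: that $\C$ lives in the same space $\R^{\nc \times \nf \times (\Nh-1)}$ as the stacked parameter $\Kb$, and that the convexity condition quoted in the corollary is the intended $\risk \costu_t \succeq \Unoisei_t$ of Theorem~\ref{thm:ConvObj} rather than a differently-scaled matrix. With these identifications the statement follows immediately.
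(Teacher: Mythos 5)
Your proposal is correct and matches the paper's reasoning exactly: the paper offers no separate proof of this corollary, treating it as immediate from Theorem \ref{thm:ConvObj} together with the preceding remark that adjoining arbitrary convex constraints (and penalties) on $\Kb$ preserves convexity, which is precisely your argument. Your added bookkeeping observations — that \eqref{eq:Dyn} and \eqref{eq:UNoise} are not constraints on $\Kb$ but define the cost, and that the condition $\risk \costu_t \succeq \Sigma_t$ in the corollary should read $\risk \costu_t \succeq \inve{\Unoise_t}$ as in the theorem — are both sound and, in the second case, catch an apparent typo in the paper's statement.
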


\section{NUMERICAL RESULTS}\label{sec:Numerical}
In this section, we present preliminary numerical results illustrating applications of the framework to various problems with comparisons to a simple baseline approach. These are not meant to be thorough numerical comparisons but simple illustrations of the power and applications of our framework.

\subsection{BINARY CLASSIFICATION}\label{sec:NumSupervised}
We look at a problem of binary classification. Let $y$ denote the actual label and $\hat{y}$ denote the predicted label. We use  the loss function
\[\costt\br{y,\hat{y}}=\begin{cases} -\infty \text{ if } y\hat{y}>0 \\ 0 \text{ otherwise }\end{cases}. \]
This is a non-convex loss function (the logarithm of the standard $0$-$1$ loss). We convexify this in the prediction $\hat{y}$ using our approach:
\[\frac{1}{\risk}\logb{\ExP{}{\expb{\risk \costt\br{y,\hat{y}}}}}+\frac{\powb{\hat{y}}{2}}{2\sigma^2}.\]
Plugging in $\hat{y}=\tran{\theta}\s$ where $\s$ is the feature vector, we get
\[\frac{1}{\risk}\logb{\ExP{\noise \sim \Gauss{0}{\sigma^2}}{\expb{\risk \costt\br{y,\tran{\theta}\s+\noise}}}}+\frac{\powb{\tran{\theta}{\s}}{2}}{2\risk\sigma^2}.\]
Plugging in the expression for $\costt$ gives
\[\frac{1}{\risk}\logb{\frac{1}{2}\mathrm{erfc}\br{\frac{y\tran{\theta}\s}{\sqrt{2}\sigma}}}+\frac{\powb{\tran{\theta}{\s}}{2}}{2\risk\sigma^2}.\]
where $\mathrm{erfc}$ is the Gaussian error function. Given a dataset $\{\br{x_i,y_u}\}$, we can form the empirical risk-minimization problem with this convexified objective:
\begin{align*}
\sum_{i=1}^M \frac{1}{\risk}\logb{\frac{1}{2}\mathrm{erfc}\br{\frac{y^i\tran{\theta}\s^i}{\sqrt{2}\sigma}}}+\frac{\powb{\tran{\theta}{\s^i}}{2}}{2\risk\sigma^2}
\end{align*}
We can drop the $\risk$ since it only scales the objective (this is a consequence of the fact that $\expb{\costt}$ is 0-1 valued and does not change on raising it to a positive power). Thus, we finally end up with
\begin{align*}
\frac{1}{M}\sum_{i=1}^M \br{\logb{\frac{1}{2}\mathrm{erfc}\br{\frac{y^i\tran{\theta}\s^i}{\sqrt{2}\sigma}}}+\frac{\powb{\tran{\theta}{\s^i}}{2}}{2\sigma^2}}.
\end{align*}
The first term is a data-fit term (a smoothed version of the 0-1 loss) and the second term is a regularizer. although we penalize the prediction $\tran{\theta}{\s}$ rather than $\theta$ itself. If $\s$ are normalized and span all directions, by summing over the entire dataset we get something close to the standard Tikhonov regularization.

We compare the performance of our convexification-based approach with a standard implementation of a Support Vector Machine (SVM) \citep{CC01a}. We use the breast cancer dataset from \citep{UCIDataset}. We compare the two algorithms on various train-test splits of the dataset (without using cross-validation or parameter tuning). For each split, we create a noisy version of the dataset by adding Gaussian noise to the labels and truncating to $+1/-1$:$\hat{y}^i=\sign\br{y^i+\noise}, \noise \sim \Gauss{0}{\sigma^2}$. The accuracy of the learned classifiers  on withheld test-data, averaged over $50$ random train-test splits with label corruption as described above, are plotted as function of the noise level $\sigma$ in figure \ref{fig:ConvexEffect}. This is not a completely fair comparison since our approach explicitly optimizes for the worst case under Gaussian perturbations to the prediction (which can also be seen as a Gaussian perturbation to the label). However, as mentioned earlier, the purpose of these numerical experiments is to illustrate the applicability of our convexification approach to various problems so we do not do a careful comparison to robust variants of SVMs, which would be better suited to the setting described here.
\begin{figure}[htb]
\begin{center}
\includegraphics[width=.8\columnwidth]{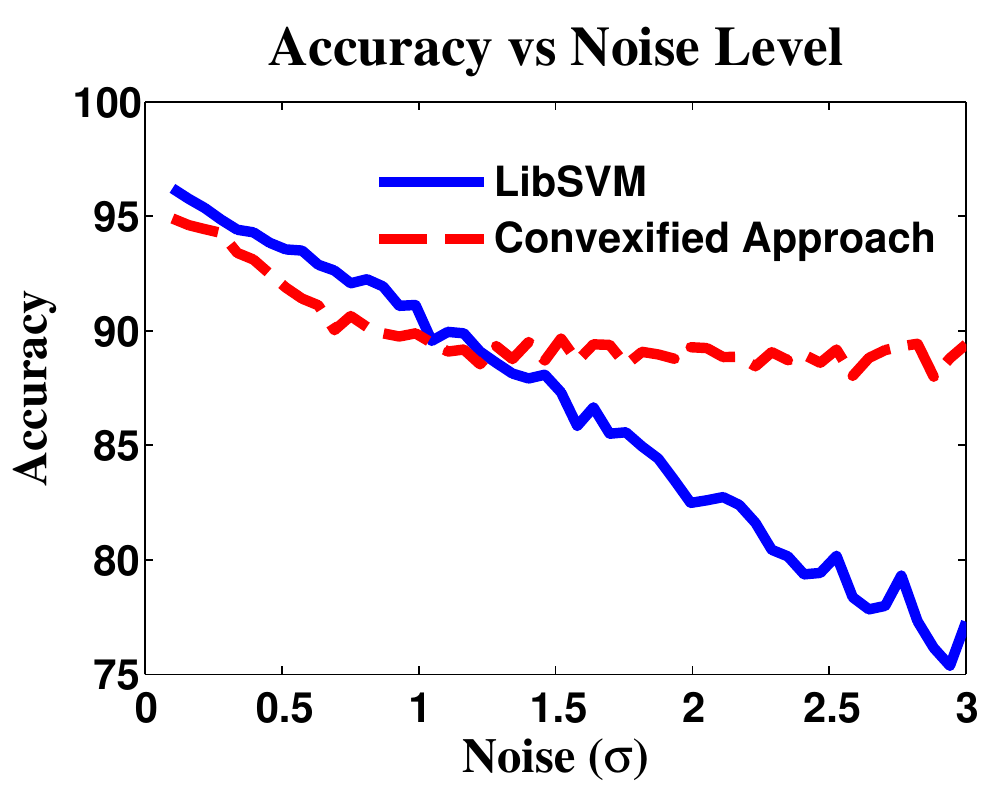}
\end{center}
\caption{Binary Classification}\label{fig:Binary}
\end{figure}

\subsection{CLASSIFICATION WITH NEURAL NETWORKS}

We present an algorithm that does neural network training using the results of section \ref{sec:Control}. Each layer of the neural network is a time-step in a dynamical system, and the neural network weights correspond to the time-varying policy parameters. Let $h$ denote a component-wise nonlinearity applied to its vector-input (a transfer function). The deterministic dynamics is
\[\s_{t+1}=h\br{K_t\s_t},\s_0=\s\]
where $\s_t$ is the vector of activations at the $t$-th layer, $K_t$ is the weight matrix and $\s$ is the input to the neural network. The output is  $\s_{\Nh}$, where $\Nh$ is the number of layers in the network. The cost function is simply the loss function between the output of the neural network $\s_{\Nh}$ and a desired output $y$: $\costt\br{y,\s_{\Nh}}$. To put this into our framework, we add noise to the input of the transfer function at each layer:
 \[\s_{t+1}=h\br{K_t\s_t+\noise_t},\s_1=\s, \noise_t \sim \Gauss{0}{\Sigma_t}\]
 Additionally, we define the objective to be
 \[\ExP{}{\expb{\risk\costt\br{y,\s_{\Nh}}+\sum_{t=1}^{\Nh-1} \frac{\norm{K_t\s_t}_{\inve{\Sigma_t}}^2}{2}}}\]
where the expectation is with respect to the Gaussian noise added at each layer in the network. Note that the above objective is a function of $\Kb,\s,y$. The quadratic penalty on $K_t\s_t$ can again be thought of as a particular type of regularization which encourages learning networks with small internal activations.
We add this objective over the entire dataset $\{\s^{i},y^{i}\}$ to get our overall training objective.

We evaluate this approach on a small randomly selected subset of the MNIST dataset \citep{lecun1998gradient}. We use the version available at {\tt http://nicolas.le-roux.name/} along with the MATLAB code provided for training neural networks. We use a 2-layer neural network with 20 units in the hidden layer and tanh-transfer functions in both layers.
We use a randomly chosen collection of 900 data points for training and another 100 data points for validation. We compare training using our approach with simple backprop based training. Both of the approaches use a stochastic gradient - in our approach the stochasticity is both in selection of the data point $i$ and the realization of the Gaussian noise $\noise$ while in standard backprop the stochasticity is only in the selection of $i$. We plot learning curves (in terms of generalization or test error) for both approaches, as a function of the number of neural network evaluations (forward+back prop) performed by the algorithm in figure \ref{fig:NeuralNet}.
The nonconvex approach based on standard backprop-gradient descent gets stuck in a local minimum and does not improve test accuracy much. On the other hand, the convexified approach is able to learn a classifier that generalizes better. We also compared backprop with training a neural network on a 1-dimensional regression problem where the red curve represents the original function with data-points indicated by squares, the blue curve the reconstruction learned by our convexified training approach and the black curve the reconstruction obtained by using backprop (figure \ref{fig:NeuralNetReg}). Again, backprop gets stuck in a bad local minimum while our approach is able to find a fairly accurate reconstruction.
\begin{figure}
\begin{center}
\begin{subfigure}[b]{0.65\columnwidth}
\includegraphics[width=.98\columnwidth]{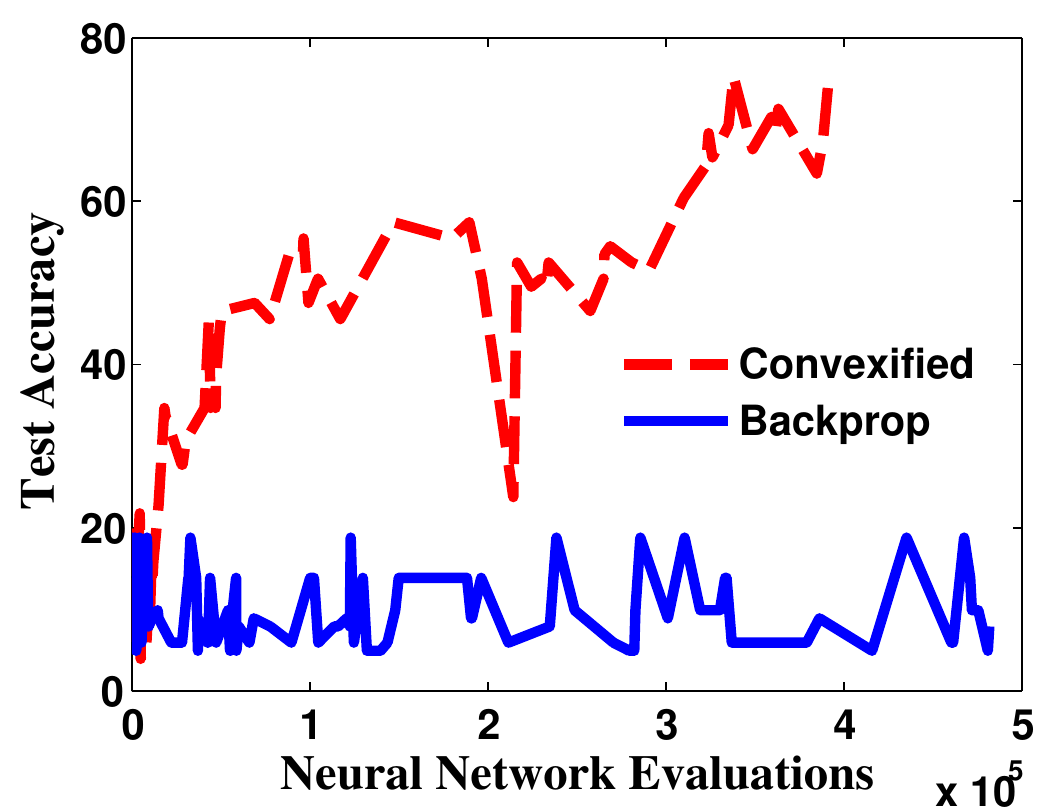}
\caption{Classification}\label{fig:NeuralNet}
 \end{subfigure}\\
 \begin{subfigure}[b]{0.6\columnwidth}
\includegraphics[width=.98\columnwidth]{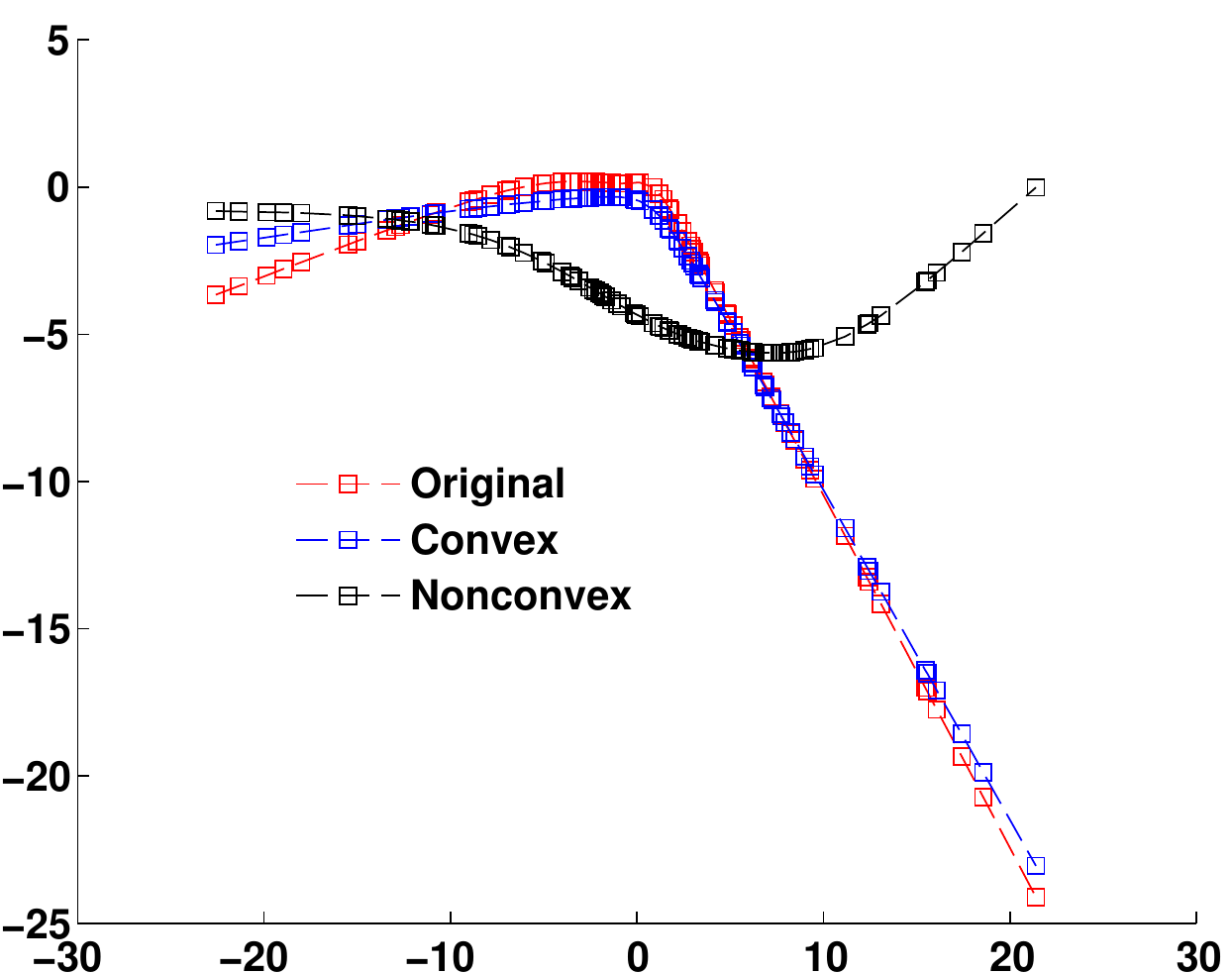}
\caption{Regression}\label{fig:NeuralNetReg}
 \end{subfigure}%
 \caption{Training Neural Networks}
\end{center}
\end{figure}

\section{CONCLUSION AND FUTURE WORK}
We have developed a general framework for convexifying a broad class of optimization problems, analysis that relates the solution of the convexified problem to the original one and given algorithms with convergence rate guarantees to solve the convexified problems. Extending the framework to dynamical systems, we derive the first approach to policy optimization with optimality and convergence rate guarantees. We validated our approach numerically on problems of binary classification and training neural networks. In future work, we will refine the suboptimality analysis for our convexification approach. Algorithmically, stochastic gradient methods could be slow if the variance in the gradient estimates is high, which is the case when using the exponentiated objective (as in section \ref{sec:Algorithms}). We will study the applicability of recent work on using better sampling algorithms with stochastic gradient \citep{StochasticProximal} to our convexified problems. 

\section*{APPENDIX}

\begin{corollary}{\citep{boucheron2013concentration}, corollary 4.15}\label{thm:KLvar}
Let $P,Q$ are arbitrary distributions on some space $\Omega$ and $f: \Omega \to \R$ is such that $\ExP{\noise\sim P}{\expb{f\br{\noise}}}<\infty$. Then, the Kullback-Leibler divergence satisfies:
\begin{align*}
& \KL{Q}{P}=  \sup_{f}\left[\ExP{\noise \sim Q}{f\br{\noise}}-\logb{\ExP{\noise \sim P}{\expb{f\br{\noise}}}}\right]
\end{align*}
\end{corollary}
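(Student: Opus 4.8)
The plan is to prove this variational identity (the Donsker--Varadhan / Gibbs variational principle) by sandwiching $\KL{Q}{P}$ between the supremum on the right and a single achievable value. The central tool is the nonnegativity of the Kullback--Leibler divergence, combined with a change of measure to the \emph{tilted} distribution associated with a test function $f$.

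First I would establish the ``$\geq$'' direction, namely that every admissible $f$ furnishes a lower bound. For any $f$ with $\ExP{\noise \sim P}{\expb{f\br{\noise}}} < \infty$, define the tilted probability measure $P_f$ through $\frac{dP_f}{dP}\br{\noise} = \frac{\expb{f\br{\noise}}}{\ExP{\noise \sim P}{\expb{f\br{\noise}}}}$. A direct computation of the Radon--Nikodym derivative then gives $\log \frac{dQ}{dP_f} = \log \frac{dQ}{dP} - f + \logb{\ExP{\noise \sim P}{\expb{f\br{\noise}}}}$, and taking expectations under $Q$ yields the exact identity $\KL{Q}{P_f} = \KL{Q}{P} - \ExP{\noise \sim Q}{f\br{\noise}} + \logb{\ExP{\noise \sim P}{\expb{f\br{\noise}}}}$. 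Because $\KL{Q}{P_f} \geq 0$ (Gibbs' inequality, itself a consequence of Jensen applied to the convex map $t \mapsto -\log t$), rearranging produces $\ExP{\noise \sim Q}{f\br{\noise}} - \logb{\ExP{\noise \sim P}{\expb{f\br{\noise}}}} \leq \KL{Q}{P}$ for every admissible $f$, and hence the same bound for the supremum.

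Next I would establish ``$\leq$'' by exhibiting a maximizer. Assuming $Q \ll P$, take $f^\ast = \log \frac{dQ}{dP}$. Then $\ExP{\noise \sim P}{\expb{f^\ast\br{\noise}}} = \ExP{\noise \sim P}{\frac{dQ}{dP}} = 1$, so its logarithm vanishes, while $\ExP{\noise \sim Q}{f^\ast\br{\noise}} = \KL{Q}{P}$ by the very definition of the divergence. The objective therefore attains the value $\KL{Q}{P}$ at $f^\ast$, showing the supremum is at least $\KL{Q}{P}$ and closing the two inequalities.

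The main obstacle is the measure-theoretic edge case and the integrability bookkeeping rather than the algebra. When $Q \not\ll P$ one has $\KL{Q}{P} = +\infty$, and I would need to verify the supremum is also $+\infty$: choosing a set $A$ with $Q\br{A} > 0$ but $P\br{A} = 0$ and testing with $f = c\,\mathbf{1}_A$ gives $\ExP{\noise \sim P}{\expb{f}} = 1$ while $\ExP{\noise \sim Q}{f} = c\,Q\br{A} \to \infty$ as $c \to \infty$. A secondary subtlety is that $f^\ast = \log\frac{dQ}{dP}$ need not be bounded, so if one insists on restricting the test class to bounded or continuous functions, one must instead approximate $f^\ast$ by truncations $f^\ast_n = \max\br{\min\br{f^\ast, n}, -n}$ and pass to the limit via monotone and dominated convergence to recover the value $\KL{Q}{P}$. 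Since the statement here ranges over all $f$ satisfying the stated integrability condition, the clean choice $f = f^\ast$ suffices and this approximation step is needed only for the restricted-class variant.
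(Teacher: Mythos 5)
Your proof is correct, but there is nothing in the paper to compare it against: the paper does not prove this statement at all. It is quoted verbatim as Corollary 4.15 of \citep{boucheron2013concentration} and used as a black box in the proof of Lemma \ref{lem:Bound}, where the lower bound on $\KL{Q}{P}$ is exactly the ``$\geq$'' direction you establish. What you have written is the standard Donsker--Varadhan (Gibbs variational) argument, and it is sound: the tilting construction $\frac{dP_f}{dP} = \expb{f}/\ExP{\noise\sim P}{\expb{f\br{\noise}}}$ plus nonnegativity of $\KL{Q}{P_f}$ gives that every admissible $f$ lower-bounds the divergence, and the choice $f^\ast = \log\frac{dQ}{dP}$ (admissible since $\ExP{\noise\sim P}{\expb{f^\ast\br{\noise}}} = 1$) attains it when $Q \ll P$. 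You also correctly identify and dispatch the two genuine subtleties: the singular case $Q \not\ll P$, where the indicator test functions $c\,\mathbf{1}_A$ drive the supremum to $+\infty$ to match $\KL{Q}{P} = +\infty$, and the unboundedness of $f^\ast$ (including the set where $\frac{dQ}{dP} = 0$, where $f^\ast$ is not real-valued, and the case $\KL{Q}{P} = +\infty$ with $Q \ll P$), both of which are handled by the truncations $f^\ast_n$ and monotone convergence. One small piece of bookkeeping worth making explicit if you want the ``$\geq$'' direction fully airtight: the identity $\KL{Q}{P_f} = \KL{Q}{P} - \ExP{\noise\sim Q}{f\br{\noise}} + \logb{\ExP{\noise\sim P}{\expb{f\br{\noise}}}}$ involves splitting an expectation, so to rule out an $\infty - \infty$ ambiguity when $\ExP{\noise\sim Q}{f\br{\noise}}$ might be $+\infty$, apply the argument first to $\min\br{f,n}$ and let $n \to \infty$; this is the same truncation device you already invoke elsewhere. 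The net effect of your proof is to make the paper's appendix self-contained where the authors chose brevity by citation; for the paper's purposes (Lemma \ref{lem:Bound}) only your first direction is actually needed.
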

\subsection{PROOF OF THEOREM \ref{thm:AlgConv}}
\newcommand{\nabf}{\tilde{\nabla} f}
\newcommand{\nabbf}{\hat{\nabla} f}
Throughout this section, Let $\tilde{\risk}=2\risk$ and $P$ denote the Gaussian density $\Gauss{0}{\Sigma}$. Define a new distribution $Q$ with density
 $Q\br{\noise} \propto P\br{\noise}\expb{\tilde{\risk} \fpertr{f}{\omega}{\theta}}$. We denote $\hat{\nabla} \Gb\br{\theta,\noise}$ by $\hat{\nabla} \Gb$, $\nabla f\br{\theta+\omega}$ by $\nabbf$ and $\nabla f\br{\theta+\omega}+\kappa\theta$ by $\nabf$ for brevity. Expectations are always with respect to $\noise\sim P$, unless denoted otherwise.
\begin{proof}
$\powb{\risk}{-2} \ExP{}{\norm{\hat{\nabla} \Gb}^2}$ evaluates to
\begin{align}
&\expb{\risk\kappa\tran{\theta}\theta} \ExP{}{\expb{\tilde\risk f\br{\noise+\theta}}\norm{\nabf}^2}=\nonumber\\
&\expb{\tilde{\risk}\br{\frac{\kappa}{2}\tran{\theta}\theta+\fpert{f}{\Sigma}{\theta}}} \ExP{}{\expb{\tilde{\risk} \fpertr{f}{\noise}{\theta}}\norm{\nabf}^2} =\nonumber\\
&  \expb{\tilde{\risk}\br{\fpert{g}{\Sigma}{\theta}-\frac{1}{2}\sigma^2\kappa}}
\ExP{}{\expb{\tilde{\risk} \fpertr{f}{\noise}{\theta}}\norm{\nabf}^2} \label{eq:Fthmexp1}
\end{align}
By the theorem hypotheses, the term outside the expectation is bounded above by $\risk^2\expb{2\risk\mup-\sigma^2\kappa}$. We are left with
\[\ExP{}{\expb{\tilde\risk \fpertr{f}{\noise}{\theta}}\norm{\nabf}^2}.\]
Dividing this by ${\ExP{}{\expb{\tilde{\risk} \fpertr{f}{\noise}{\theta}}}}$, we get
\begin{align}
\frac{\ExP{}{\expb{\tilde\risk \fpertr{f}{\noise}{\theta}}\norm{\nabf}^2}}{\ExP{}{\expb{\tilde\risk \fpertr{f}{\noise}{\theta}}}} =\ExP{\noise \sim Q}{\norm{\nabf}^2}.\label{eq:Fthmexp2}
\end{align}
Expanding $\norm{\nabf}^2$, we get
\begin{align*}
&\norm{\nabbf}^2+\kappa^2\norm{\theta}^2+2\tran{\nabbf}\theta \\
&\leq  \norm{\nabbf}^2+\kappa^2\Rad{\C}^2+2\kappa\norm{\nabbf}\Rad{\C}
\end{align*}
Finally from lemma \ref{lem:Bound}, we have
\[\ExP{\noise \sim Q}{\norm{\nabbf}^2}\leq \frac{\beta \gamma^2 }{\sigma^2\br{1-\risk\beta}}\] and by concavity of the square-root function,
\[\ExP{\noise \sim Q}{\norm{\nabbf}}\leq \sqrt{\frac{\beta \gamma^2 }{\sigma^2\br{1-\risk\beta}}}.\]
Plugging this bounds into the square expansion and letting $\delta=\sqrt{\frac{\beta \gamma^2 }{\sigma^2\br{1-\risk\beta}}}+\kappa\Rad{\C}$, we get
\begin{align}
&\ExP{\noise \sim Q}{\norm{\nabla f\br{\theta+\noise}+\kappa \theta}^2}\leq \delta^2 \label{eq:Fthmexp3}
\end{align}
From \eqref{eq:Fthmexp1},\eqref{eq:Fthmexp2} and \eqref{eq:Fthmexp3}, $\ExP{}{\norm{\hat{\nabla} \Gb}^2}$ is smaller than
\begin{align*}
 \risk^2\expb{2\risk\mup-\sigma^2\kappa}\delta^2\ExP{}{\expb{\tilde{\risk} \fpertr{f}{\omega}{\theta}}}
\end{align*}
Finally, by the last part of theorem \ref{thm:LogSobolev},
\[\ExP{}{\expb{\tilde{\risk} \fpertr{f}{\omega}{\theta}}} \leq \expb{\frac{\risk \beta}{1-\risk\beta}+2\risk\gamma^2}. \]
Combining the two above results gives the theorem.
\end{proof}
\begin{lemma}\label{lem:Bound}
Under the assumptions of theorem \ref{thm:AlgConv},
\begin{align}
\ExP{\noise \sim Q}{\sigma^2\norm{\nabla f\br{\theta+\noise}}^2} \leq \frac{\beta \gamma^2 }{1-\risk\beta}\label{eq:BoundInter}\end{align}
\end{lemma}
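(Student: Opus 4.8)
The plan is to convert the desired bound into a self-referential inequality for the single quantity $A=\ExP{\noise\sim Q}{\sigma^2\norm{\nabla f\br{\theta+\noise}}^2}$, by sandwiching $A$ between two estimates that both feature $\KL{Q}{P}$. One estimate comes from the Gibbs variational formula (Corollary \ref{thm:KLvar}) combined with the moment hypothesis of Theorem \ref{thm:AlgConv}; the other comes from the log-Sobolev inequality (Theorem \ref{thm:LogSobolev}). Eliminating $\KL{Q}{P}$ produces a linear inequality in $A$ that can be solved for $A$ precisely because $\beta<\powb{\risk}{-1}$ forces $1-\risk\beta>0$.

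First I would invoke the dual characterization of the KL divergence. By Corollary \ref{thm:KLvar}, for every test function $h$ we have $\ExP{\noise\sim Q}{h\br{\noise}}\leq\KL{Q}{P}+\logb{\ExP{\noise\sim P}{\expb{h\br{\noise}}}}$. Choosing $h\br{\noise}=\frac{2\risk\sigma^2}{\beta}\norm{\nabla f\br{\theta+\noise}}^2$ makes the log-moment term exactly the one controlled by the hypothesis of Theorem \ref{thm:AlgConv}, so that $\logb{\ExP{\noise\sim P}{\expb{h}}}\leq 2\risk\gamma^2$. Since $\ExP{\noise\sim Q}{h}=\frac{2\risk}{\beta}A$, this yields the first estimate
\[\frac{2\risk}{\beta}A\leq\KL{Q}{P}+2\risk\gamma^2.\]

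Next I would bound $\KL{Q}{P}$ from above using the log-Sobolev inequality, exploiting that $Q$ is an exponential tilt of $P$. Writing $Z=\ExP{\noise\sim P}{\expb{\tilde{\risk}\fpertr{f}{\noise}{\theta}}}$, the density ratio factors as $\frac{dQ}{dP}\br{\noise}=\expb{\tilde{\risk}\fpertr{f}{\noise}{\theta}}/Z=\phi\br{\noise}^2$ with $\phi\br{\noise}=\expb{\risk\fpertr{f}{\noise}{\theta}}/\sqrt{Z}$, recalling $\tilde{\risk}=2\risk$. Applying the first inequality of Theorem \ref{thm:LogSobolev} to $\phi$, the left-hand side is the entropy $\ExP{}{\phi^2\logb{\phi^2/\ExP{}{\phi^2}}}$, which equals $\KL{Q}{P}$ because $\ExP{\noise\sim P}{\phi^2}=1$. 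For the right-hand side, using $\nabla_{\noise}\fpertr{f}{\noise}{\theta}=\nabla f\br{\theta+\noise}$ gives $\norm{\nabla\phi}^2=\frac{\risk^2}{Z}\norm{\nabla f\br{\theta+\noise}}^2\expb{\tilde{\risk}\fpertr{f}{\noise}{\theta}}$, so the factor $Z$ cancels against the tilt and $\ExP{\noise\sim P}{\norm{\nabla\phi}^2}=\risk^2\ExP{\noise\sim Q}{\norm{\nabla f\br{\theta+\noise}}^2}$. Hence the inequality reads
\[\KL{Q}{P}\leq 2\sigma^2\risk^2\ExP{\noise\sim Q}{\norm{\nabla f\br{\theta+\noise}}^2}=2\risk^2 A.\]

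Finally I would combine the two displays: substituting $\KL{Q}{P}\leq 2\risk^2 A$ into the first estimate gives $\frac{2\risk}{\beta}A\leq 2\risk^2 A+2\risk\gamma^2$, and dividing by $2\risk$ yields $\br{\frac{1}{\beta}-\risk}A\leq\gamma^2$. Since $\beta<\powb{\risk}{-1}$, the coefficient $\frac{1}{\beta}-\risk=\frac{1-\risk\beta}{\beta}$ is strictly positive, so dividing through gives $A\leq\frac{\beta\gamma^2}{1-\risk\beta}$, which is exactly \eqref{eq:BoundInter}. I expect the main obstacle to be the log-Sobolev step: one must justify that $\phi=\sqrt{dQ/dP}$ is an admissible (continuously differentiable, square-integrable) function for Theorem \ref{thm:LogSobolev}, verify the two functional identifications (entropy equals $\KL{Q}{P}$, and the gradient term equals $\risk^2 A$), and confirm that the normalizer $Z$ cancels cleanly. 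The variational step and the concluding algebra are then routine.
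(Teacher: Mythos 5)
Your proposal is correct and follows essentially the same route as the paper: both arguments sandwich $\KL{Q}{P}$ between the Gibbs/variational lower bound from Corollary \ref{thm:KLvar} (with the identical test function $\frac{2\risk\sigma^2}{\beta}\norm{\nabla f\br{\theta+\noise}}^2$, whose log-moment is controlled by the hypothesis of Theorem \ref{thm:AlgConv}) and the log-Sobolev upper bound $2\risk^2\sigma^2\ExP{\noise\sim Q}{\norm{\nabla f\br{\theta+\noise}}^2}$ obtained by applying Theorem \ref{thm:LogSobolev} to the (square root of the) tilted density, then solve the resulting linear inequality using $1-\risk\beta>0$. Your write-up is in fact slightly more explicit than the paper's at the log-Sobolev step, where you verify the normalization $Z$ cancels and that the entropy term equals $\KL{Q}{P}$, details the paper leaves implicit.
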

\begin{proof}
 The KL-divergence between $Q$ and $P$ is given by
\[\int \frac{P\br{\noise}\expb{\tilde{\risk} \fpertr{f}{\noise}{\theta}}}{\ExP{}{\expb{\tilde{\risk} \fpertr{f}{\noise}{\theta}}}}\logb{\frac{\expb{\tilde{\risk} \fpertr{f}{\noise}{\theta}}}{\ExP{}{\expb{\tilde{\risk} \fpertr{f}{\noise}{\theta}}}}}\diff \noise.\]
By theorem \ref{thm:LogSobolev} applied to $\expb{\frac{\tilde{\risk}}{2} f\br{\theta+\noise}}$), the above quantity is bounded above by $\frac{1}{2}\br{\tilde{\risk}\sigma}^2\ExP{\noise \sim Q}{\norm{\nabbf}^2}$. Then, by corollary \ref{thm:KLvar},
\begin{align*}
\KL{Q}{P} & \geq \ExP{\noise \sim Q}{\frac{\tilde{\risk}\sigma^2}{\beta}\norm{\nabbf}^2}  \nonumber \\
& \quad -\logb{\ExP{}{\expb{\frac{\tilde{\risk}\sigma^2}{\beta}\norm{\nabbf}^2}}}.
\end{align*}
Denote the second term in the RHS by $\Gamma$. Plugging in the upper bound on $\KL{Q}{P}$, we get
\[\Gamma   \geq \sigma^2\br{\frac{\tilde{\risk}}{\beta}-\frac{\tilde{\risk}^2}{2}}\ExP{\noise \sim Q}{\norm{\nabla f\br{\theta+\noise}}^2}.\]
Since the LHS is upper bounded by $\tilde{\risk} \gamma^2$ (hypothesis of theorem)
which gives us the bound
\[ \ExP{\noise \sim Q}{\sigma^2\norm{\nabla f\br{\theta+\noise}}^2}\leq \frac{2\beta \gamma^2 }{2-\tilde{\risk}\beta}=\frac{\beta \gamma^2}{1-\risk\beta}.\]
\end{proof}

\bibliography{../../../Thesis/uwthesis/bibliography}
\bibliographystyle{plainnat}

\end{document}